\newcounter{MYtempeqncnt}
\newtheorem{theorem}{Theorem}
\newtheorem{lemma}[theorem]{Lemma}
\newtheorem{definition}{Definition}
\newcommand{\field}[1]{\mathbb{#1}}
\newcommand{\R}{\field{R}} 
\newcommand{\cmle}{\text{\rm cml}}
\newcommand{\minimax}{\text{\rm minimax}}
\newcommand{\bayesQ}{\hat{X}_Q^{\text{\rm Bayes}}}
\newcommand{\bayesQstar}{\hat{X}_{Q^*}^{\text{\rm Bayes}}}
\newcommand{\bayesPtheta}{\hat{X}_{P_{\theta}}^{\text{\rm Bayes}}}
\newcommand{\cmlbayesQ}{\cmle(\theta,\bayesQ)}
\newcommand{\cmlbayesQstar}{\cmle(\theta,\bayesQstar)}
\newcommand{\cmlbayesPtheta}{\cmle(\theta,\bayesPtheta)}
\newcommand{\cmlxhat}{\cmle(\theta,\hat{X})}
\newcommand{\regretxhat}{R(\theta,\hat{X})}
\newcommand{\regretbayesQ}{R(\theta,\bayesQ)}
\newcommand{\E}{\field{E}}
\newcommand{\mathP}{\mathcal{P}}
\newcommand{\mathQ}{\mathcal{Q}}
\newcommand{\mathX}{\mathcal{X}}
\newcommand{\mathY}{\mathcal{Y}}
\newcommand{\mathT}{\mathcal{T}}
\newcommand{\mathN}{\mathcal{N}}
\newcommand{\mathS}{\mathcal{S}}
\newcommand{\diag}{\mbox{diag}}
\newcommand{\conv}{\text{\rm conv}}
\newcommand{\cl}{\text{\rm cl}}
\newcommand{\bea}{\begin{eqnarray}}
\newcommand{\eea}{\end{eqnarray}}
\newcommand{\beas}{\begin{eqnarray*}}
\newcommand{\eeas}{\end{eqnarray*}}
\newcommand{\suchthat}{\;\ifnum\currentgrouptype=16 \middle\fi|\;}
\begin{document}
\title{Minimax Filtering Regret via Relations Between Information and Estimation}

\author{Albert No, \emph{Student Member, IEEE}, and Tsachy Weissman, \emph{Fellow, IEEE}}
\maketitle
\begin{abstract}
We investigate the problem of continuous-time causal estimation under a minimax criterion. Let $X^T = \{X_t,0\leq t\leq T\}$ be governed by the probability law $P_{\theta}$ from a class of possible laws indexed by $\theta \in \Lambda$, and $Y^T$ be the noise corrupted observations of $X^T$ available to the estimator. We characterize the estimator minimizing the worst case regret, where regret is the difference between the causal estimation loss of the estimator and that of the optimum estimator.

\let\thefootnote\relax\footnotetext{This paper was presented in part at the 2013 IEEE International Symposium on Information Theory.This work was supported by the NSF
Center for Science of Information under Grant Agreement CCF-0939370.

The authors are with the Department of Electrical Engineering, Stanford University, Stanford, CA 94305 USA (e-mail: albertno@stanford.edu; tsachy@stanford.edu).

Copyright (c) 2014 IEEE. Personal use of this material is permitted.  However, permission to use this material for any other purposes must be obtained from the IEEE by sending a request to pubs-permissions@ieee.org.}

One of the main contributions of this paper is characterizing the minimax estimator, showing that it is in fact a Bayesian estimator. We then relate minimax regret to the channel capacity when the channel is either Gaussian or Poisson. In this case, we characterize the minimax regret and the minimax estimator more explicitly. If we further assume that the uncertainty set consists of deterministic signals, the worst case regret is exactly equal to the corresponding channel capacity, namely the maximal mutual information attainable across the channel among all possible distributions on the uncertainty set of signals. The corresponding minimax estimator is the Bayesian estimator assuming the capacity-achieving prior. Using this relation, we also show that the capacity achieving prior coincides with the least favorable input. Moreover, we show that this minimax estimator is not only minimizing the worst case regret but also essentially minimizing regret for ``most" of the other sources in the uncertainty set.

We present a couple of examples for the construction of an minimax filter via an approximation of the associated capacity achieving distribution. 

\begin{keywords}
Mismatched estimation, minimax regret, regret-capacity, strong regret-capacity, directed information, sparse signal estimation, AWGN channel, Poisson channel, least favorable input.
\end{keywords}
\end{abstract}
\IEEEpeerreviewmaketitle

\section{Introduction}\label{sec:introduction}
Recent work on relations between information and estimation has shown fundamental links between the causal estimation error and information theoretic quantities. In \cite{duncan70}, Duncan showed that causal estimation error of an additive white Gaussian noise (AWGN) corrupted signal is equal to the mutual information between the input and output processes divided by the signal-to-noise ratio. In \cite{weissman10}, Weissman extended the result to the case of mismatched estimation, where the estimator assumes that the input signal is governed by a law $Q$ while its true law is $P$. In this case, the cost of mismatch, which is half the difference between the mismatched causal estimation error and the optimum (non-mismatched) causal estimation error, is given by the relative entropy between the laws of output processes when the input processes have laws $P$ and $Q$, respectively. In \cite{atar12}, Atar and Weissman showed that parallel information-estimation relations exist in the Poisson channel for both mismatched and non-mismatched settings.

 In this paper, we investigate the continuous-time causal estimation problem. We assume that the input process is governed by a probability law from a known uncertainty class $\mathP$ although the estimator does not know the true law. In particular, suppose that the input process is governed by a law $P_{\theta}\in\mathP$, where $\theta\in\Lambda$ and $\Lambda$ is the uncertainty set known to the decoder. In this setting, it is natural to consider the minimax estimator which minimizes the worst case regret, where regret is defined as the difference between the causal estimation error of the estimator and that of the optimum estimator. If there is a minimum achieving estimator, we will call it a \emph{minimax estimator} or minimax filter. One of the main contributions of this paper is characterizing the minimax estimator, showing that it is in fact a Bayesian estimator under the distribution which is the capacity-achieving mixture of distributions associated with the channel whose input is a source in the uncertainty set.

We can find similar arguments in classical universal source coding theory. In universal source coding theory, the encoder only knows that the source is governed by some law from an uncertainty set. The goal is to construct the universal code that minimizes the gap between its expected code length and that under the optimum encoding strategy for the true law. 
Redundancy capacity theory \cite{gallager76} tells us that the minimum of the worst case redundancy (minimax redundancy) coincides with the maximum mutual information between  input and output of the channel whose input is a choice of a law from the uncertainty set and whose output is a realization of that law. 

Using these ideas, we show similar results for our causal estimation problem. If the channel is either Gaussian or Poisson, we can combine the results of mismatched estimation and the above redundancy capacity theorem in order to relate the minimax regret to the corresponding channel capacity.  Indeed, the minimax regret turns out to equal to the maximum mutual information between the input index and the corresponding output which we shall refer to as \emph{regret capacity}. Moreover, the minimax filter is Bayesian with respect to the same prior that achieves maximum mutual information. Therefore, if we know the distribution that maximizes mutual information, we can induce the minimax estimator. Further, we shall see that if the class of measures $\mathP$ is a set of deterministic signals, this mutual information reduces to the mutual information between input and output processes $X^T$ and $Y^T$. This allows us to harness well known results from channel coding to characterize and construct the minimax filter. 

The relation between the capacity achieving prior and the minimax filter gives us a new link between estimation and information which is the probability law over input signals that results in the worst causal mean loss. In particular, using the regret-capacity theorem, we show that the capacity achieving prior coincides with the least favorable input. 

Since the goal in minimax estimation is to minimize the worst case regret, one may argue that the minimax estimator might not be a good estimator for many other sources in the class. However, in universal source coding theory, Merhav and Feder \cite{merhav95} showed that the minimax encoder works well for ``most" distributions in the uncertainty set, where ``most'' is measured with respect to the capacity-achieving prior which is argued to be the ``right'' prior. Indeed, the framework of \cite{merhav95} strengthened and generalized the results of this nature that were established for parametric uncertainty sets by Rissanen in \cite{rissanen84}. We apply this idea to our minimax estimation setting. These results imply that the minimax estimator not only minimizes the worst case error, but does essentially at least as well as any other estimator for most sources. 

Our results for the Gaussian and the Poisson channel carry over to accommodate the presence of feedback. In this paper, feedback means that the input process at time $t$, $X_t$, is also affected by previous outputs $\{Y_s:0\leq s<t\}$. We show that all the theorems are still valid in this case by substituting mutual information with directed information.

The rest of the paper is organized as follows. Section \ref{sec:problem setting} describes the concrete problem setting. In Section \ref{sec:main result}, we present and discuss the main results. Relation between the capacity achieving prior and the least favorable input is presented in Section \ref{sec:least favorable input}. Section \ref{sec:proof} provides proofs of the theorems. In Sections \ref{sec:examples} and \ref{sec:experiments}, we provide examples of experiments with simulated signals. We conclude with a summary in Section \ref{sec:conclusions}.

\section{Problem Setting}\label{sec:problem setting}
Let the right-continuous input process $X^T = \{X_t,0\leq t\leq T\}$ be governed by the probability law $P_{\theta}$ from some class of possible laws indexed by $\theta \in \Lambda$. Throughout the paper, we will assume that the collection of laws $\mathP = \{P_{\theta}:\theta\in\Lambda\}$ is tight. $\mathP$ and $\Lambda$ are uncertainty sets known to the estimator. Let $Y^T$ be the noise corrupted observations of $X^T$, and therefore the probability law of $Y^T$ also depends on the specific $\theta\in\Lambda$. However, we assume that the noise corruption mechanism $P_{Y^T|X^T}$ is fixed and known to the decoder. Denote the input and reconstruction alphabets by $\mathX$ and $\hat{\mathX}$, respectively. In other words, $X_t\in\mathX$ and $\hat{X}_t\in\hat{\mathX}$, where both $\mathX$ and $\hat{\mathX}$ are closed subsets of $\R$. Let the measurable\footnote{From this point on we tacitly assume measurability of all functions introduced.} $l(\cdot,\cdot):\mathX\times\hat{\mathX}\mapsto [0,\infty)$ be a given loss function. For simplicity and transparency of our arguments, we assume that $l(\cdot,\cdot)$ satisfies the following properties:
\begin{itemize}
\item[(P1)] $l(x,\hat{x})$ is a lower semi-continuous convex function in $\hat{x}$;
\item[(P2)] $\min_{\hat{x}\in\hat{\mathX}} \E[l(X,\hat{x})] = \E[l(X,\E[X])]$ for all random variables $X$ on $\mathX$.
\end{itemize}
The squared error loss function and the \emph{natural loss function} $l(x,\hat{x}) = x\log(\frac{x}{\hat{x}})-x+\hat{x}$, introduced in \cite{atar12}, are examples of loss functions satisfying these properties. Note that all Bregman loss functions satisfy (P2). Moreover, if $\E[X]$ is a unique minimizer of $\E[l(X,\hat{x})]$ for all random variables $X$ (i.e., (P2) with uniqueness), then $l(\cdot,\cdot)$ is a Bregman loss function (up to an additive constant) \cite{banerjee05}. However, Bregman loss functions are not convex in the second argument in general.

Define the causal estimator $\hat{X}_t(\cdot)$ as a function of the output process up to time $t$, i.e., $Y^t = \{Y_s,0\leq s\leq t\}$ and also define the causal mean loss associated with the filter  $\hat{X} = \{ \hat{X}_t( \cdot ), 0 \leq t \leq T \}$ by
\begin{align*}
\cmle(\theta,\hat{X}) = \E_{P_{\theta}}\left[\int_0^T l(X_t,\hat{X}_t(Y^t)) dt\right]
\end{align*}
where $\E_{P_{\theta}}[\cdot]$ denotes expectation under $P_{\theta}\times P_{Y^T|X^T}$.  We will use $\E_{P_{\theta}}[\cdot|Y^t]$ in the rest of paper which denotes conditional expectation under $P_{\theta}\times P_{Y^T|X^T}$.

\section{Main Results}\label{sec:main result}

\subsection{Minimax Causal Estimation Criterion}\label{subsec:minimax causal estimation criterion}
If the estimator knows the true law $P_{\theta}$, property (P2) implies that the optimum filter will be the Bayesian estimator with respect to the law $P_{\theta}$, i.e., the estimate at time $t$ will be $\E_{P_{\theta}}[X_t|Y^t]$. However, since the estimator does not know the true law $P_{\theta}$, the estimator can be optimized for law $Q$ (while the active law remains $P_{\theta}$). Then the estimator is the Bayesian estimator $\bayesQ$, 
where $\bayesQ=\{\E_Q[X_t|\cdot] : 0\leq t\leq T\}$ denotes the collection of Bayesian filter under prior $Q$, i.e., the estimate at time $t$ will be $\bayesQ(Y^t)=\E_Q[X_t|Y^t]$. The corresponding mismatched causal mean loss will be 
\begin{align*}
\cmlbayesQ = \E_{P_{\theta}}\left[\int_0^T l(X_t,\E_Q[X_t|Y^t]) dt\right].
\end{align*}

We can treat $\cmlbayesPtheta$ as our benchmark since it minimizes the causal mean loss when the $P_\theta$ is exactly known. Therefore, we define regret of the filter $\hat{X}$ when the active source is $P_\theta$ by
\begin{align*}
\regretxhat =\cmlxhat -\cmlbayesPtheta.
\end{align*}

Since we do not have a prior on $\theta$, it is natural to seek to minimize the  worst-case regret over all possible $\theta\in\Lambda$. Specifically, define $\minimax(\Lambda)$ as
\beas
\minimax(\Lambda) =\inf_{\hat{X}} \sup_{\theta\in\Lambda} \regretxhat,
\eeas
where the infimum is over all possible filters. If there exists an infimum achieving $\hat{X}$, we will say $\hat{X}$ is the minimax filter.

\subsection{Statement of Results}\label{subsec:results}

\begin{theorem}\label{thm:restrict to bayesian}
Suppose there exists some reference symbol $\hat{x}_0\in\hat{\mathX}$ such that $ \E_{P_{\theta}}[\int_0^T l(X_t,\hat{x}_0) dt] < \infty$ for all $\theta\in\Lambda$. Let $\mathQ$ denote the convex hull of the closure of the uncertainty set $\mathP$, i.e., $\mathQ = \conv(\cl(\{P_{\theta} ; \theta\in\Lambda\}))$. Let $l(\cdot,\cdot)$ be a loss function with properties (P1) and (P2). Then, the minimax estimator is a Bayesian estimator, i.e., 
\begin{align}
&\minimax(\Lambda)\nonumber\\
 =&\min_{Q\in\mathQ}\sup_{\theta\in\Lambda}   \regretbayesQ \nonumber\\
=&\min_{Q\in\mathQ}\sup_{\theta\in\Lambda}   \{\cmlbayesQ -\cmlbayesPtheta \}\label{eq:minimax is bayesian} .
\end{align}
\end{theorem}

Consider the following two canonical continuous-time channel models which define the conditional law $P_{Y^T|X^T}$.
\subsubsection{Gaussian Channel}\label{subsubsec:gaussian channel}
Suppose that under all $P_\theta\in\mathP$, the output process $Y^T$ is the AWGN corrupted version of $X^T$, i.e.,
\beas
dY_t = X_t dt + dW_t
\eeas
where $W^T$ is a standard Brownian motion independent of $X^T$. We consider half the squared loss function which is $l(x,\hat{x}) = \frac{1}{2}(x-\hat{x})^2$, where we introduce the factor $1/2$ to streamline the exposition that follows. 

\subsubsection{Poisson Channel}\label{subsubsec:poisson channel}
Suppose that under all $P_\theta \in \mathP$, the output $Y^T$ is a non-homogeneous Poisson process with intensity $X^T$, where $X^T$ is a nonnegative stochastic process. As in \cite{atar12}, we employ the \emph{natural loss function} $l(x,\hat{x})=x\log(x/\hat{x})-x+\hat{x}$. This loss function is a natural choice for the Poisson channel, cf. \cite[Lemma 2.1]{atar12}.

 Let define a virtual channel which takes $\theta\in\Lambda$ as an input. The corresponding output of the virtual channel is $Y^T$ which is a realization of the output process when the input has law $P_{\theta}$. Then the capacity of the virtual channel is $\sup_{w\in\mu(\Lambda)} I(\Theta;Y^T)$ where $\Theta$ is a random variable that takes a value from $\Lambda$ and $\mu(\Lambda)$ denotes the class of all probability measures on the set $\Lambda$. We are now ready to state our main results.

\begin{theorem}[Regret-Capacity]\label{thm:minimax to mutual information}
 Let the setting be either that of the Gaussian channel or the Poisson channel. Then $\minimax(\Lambda)$ is equal to the capacity of the virtual channel, i.e., 
\begin{align}
\minimax(\Lambda) =\sup_{w\in\mu(\Lambda)} I(\Theta;Y^T). \label{eq:minimax to mutual information}
\end{align}
\end{theorem}

\begin{theorem}[Minimax Filter]\label{thm:minimax filter}
Suppose the supremum in \eqref{eq:minimax to mutual information} is achieved by $w^*\in\mu(\Lambda)$. Then the minimum in \eqref{eq:minimax is bayesian} is achieved by the Bayesian optimum filter with respect to $Q^*$ where $Q^*$ is the mixture of $P_{\theta}$'s with respect to $w^*$, i.e.,
\beas
Q^* = \int_{\theta\in\Lambda} P_{\theta} w^*(d\theta).
\eeas
Moreover, the minimax filter is $\bayesQstar$.
\end{theorem}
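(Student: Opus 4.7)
My plan is to combine Theorems \ref{thm:restrict to bayesian} and \ref{thm:minimax to mutual information} with the mismatched estimation identities of \cite{weissman10} for the Gaussian channel and \cite{atar12} for the Poisson channel to exhibit a saddle-point structure for the regret. In both settings, with the loss functions specified, the cost-of-mismatch identity states that
\begin{align}
R_{\theta,Q} \;=\; D\bigl(P_{\theta}^{Y^T}\,\|\,Q^{Y^T}\bigr),
\end{align}
where $P_{\theta}^{Y^T}$ and $Q^{Y^T}$ denote the laws on $Y^T$ induced by $P_{\theta}$ and $Q$ through the channel. Together with Theorem \ref{thm:restrict to bayesian}, this reduces the task to showing that $Q^{*}$ attains
\begin{align}
\min_{Q\in\mathQ}\;\sup_{\theta\in\Theta}\; D\bigl(P_{\theta}^{Y^T}\,\|\,Q^{Y^T}\bigr).
\end{align}

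I would then invoke the standard compensation identity
\begin{align}
\int_{\Theta} D\bigl(P_{\theta}^{Y^T}\,\|\,Q^{Y^T}\bigr)\,w(d\theta) \;=\; I_{w}(\Theta;Y^T) \;+\; D\bigl(Q_{w}^{Y^T}\,\|\,Q^{Y^T}\bigr),
\end{align}
valid for every prior $w$ on $\Theta$ and every $Q\in\mathQ$, where $Q_{w}=\int P_{\theta}\,w(d\theta)$. Specializing to $w=w^{*}$ and $Q=Q^{*}=Q_{w^{*}}$ kills the last term and yields $\int R_{\theta,Q^{*}}\,w^{*}(d\theta)=I_{w^{*}}(\Theta;Y^T)$. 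The remaining ingredient is the Kemperman--Gallager--Csisz\'ar KKT characterization of a capacity-achieving prior: every $\theta\in\Theta$ satisfies $D(P_{\theta}^{Y^T}\,\|\,Q^{*,Y^T})\leq I_{w^{*}}(\Theta;Y^T)$, with equality $w^{*}$-almost surely. This upper-bounds $\sup_{\theta}R_{\theta,Q^{*}}$ by $I_{w^{*}}(\Theta;Y^T)$, which by Theorem \ref{thm:minimax to mutual information} equals $\minimax(\Theta)$. Hence $Q^{*}$ attains the minimum in (\ref{eq:minimax is bayesian}), and Theorem \ref{thm:restrict to bayesian} identifies the Bayes filter $\E_{Q^{*}}[X_t\,|\,Y^t]$ as a minimax filter.

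The main obstacle I expect is establishing the KKT-type inequality $D(P_{\theta}^{Y^T}\,\|\,Q^{*,Y^T})\leq I_{w^{*}}(\Theta;Y^T)$ in the continuous-time, path-space setting. In discrete alphabets this follows classically from concavity of $w\mapsto I_{w}(\Theta;Y^T)$ and a perturbation $w_{\epsilon}=(1-\epsilon)w^{*}+\epsilon\delta_{\theta}$, together with an interchange of limit and integral; the corresponding argument needs to be carried out carefully on the space of trajectory-valued outputs, in particular verifying absolute continuity of $P_{\theta}^{Y^T}$ with respect to $Q^{*,Y^T}$ for all $\theta\in\Theta$ so the divergence is finite and the derivative of $\epsilon\mapsto I_{w_{\epsilon}}$ at $0$ exists. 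The attainment of the supremum by some $w^{*}$ is assumed in the statement, which sidesteps the separate existence question for the capacity-achieving prior itself.
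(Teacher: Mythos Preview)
Your approach is correct and shares the same starting point as the paper --- the mismatch identity $R_{\theta,Q}=D((P_\theta)_{Y^T}\|Q_{Y^T})$ and the compensation identity --- but diverges at the key step. You reach $\sup_{\theta}R_{\theta,Q^*}\le I_{w^*}(\Theta;Y^T)$ by invoking the Kemperman--Gallager--Csisz\'ar KKT characterization of the capacity-achieving prior, which gives the pointwise bound $D((P_\theta)_{Y^T}\|Q^{*}_{Y^T})\le I_{w^*}$ for every $\theta$. The paper instead linearizes the supremum over $\theta$ as a supremum over priors $w$, applies a minimax theorem to swap $\min_{Q}\sup_{w}$ into $\sup_{w}\min_{Q}$, and then reads off that the inner minimum at $w^*$ is attained by $Q^*=\int P_\theta\,w^*(d\theta)$; the resulting saddle-point chain
\[
\minimax(\Theta)=\int R_{\theta,Q^*}\,w^*(d\theta)\le \sup_{\theta}R_{\theta,Q^*}\le \min_{Q}\sup_{\theta}R_{\theta,Q}=\minimax(\Theta)
\]
forces equality and identifies $Q^*$ as the minimizer without ever needing the pointwise KKT inequality.

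The practical difference is exactly the obstacle you flagged: your route requires justifying the KKT characterization on path space (absolute continuity, differentiability of $\epsilon\mapsto I_{w_\epsilon}$, etc.), which is doable but adds work. The paper's route sidesteps this entirely by recycling the minimax-theorem swap already used in the proof of Theorem~\ref{thm:restrict to bayesian}, so no new analytic ingredient is needed. Your argument does buy something, though: it yields the stronger pointwise statement $R_{\theta,Q^*}\le\minimax(\Theta)$ for \emph{every} $\theta$ (not just that the sup equals $\minimax(\Theta)$), which is informative in its own right and is in fact the equalizer property one expects of a minimax rule.
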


\begin{theorem}[Strong Regret-Capacity]\label{thm:strong regret-capacity}
Suppose the supremum in \eqref{eq:minimax to mutual information} is achieved by $w^*\in\mu(\Lambda)$. For any filter $\hat{X}$ and every $\epsilon>0$,
\beas
\regretxhat  > (1-\epsilon)\cdot \minimax(\Lambda)
\eeas
for all $\theta\in\Lambda$ with the possible exception of points in a subset $B\subset \Lambda$, where
\beas
w^*(B) \leq e \cdot 2^{-\epsilon\cdot\minimax(\Lambda)}.
\eeas
\end{theorem}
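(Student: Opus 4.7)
The plan is to adapt the strong redundancy--capacity argument of Merhav--Feder \cite{merhav95} to the continuous-time causal estimation setting. The essential ingredient, coming from \cite{weissman10} for the Gaussian channel and \cite{atar12} for the Poisson channel, is the mismatched-estimation identity
\begin{equation*}
R_{\theta,Q} \;=\; D(P^{\theta}_{Y^T}\,\|\,Q_{Y^T})
\end{equation*}
for any Bayesian filter $\hat X_t = \E_Q[X_t\mid Y^t]$, where $P^{\theta}_{Y^T}$ and $Q_{Y^T}$ denote the laws that $P_\theta$ and $Q$ induce on $Y^T$. I would first invoke Theorem~\ref{thm:restrict to bayesian} to reduce the problem to the case in which $\hat X$ is the Bayesian filter under some $Q$; any slack arising when $\hat X$ is not Bayesian, or when $\theta$ lies outside the support of $w^*$, will be absorbed into the factor $e$ by perturbing $Q$ with a small convex combination of $Q^*$.

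Set $C=\minimax(\Theta)$ and $Q^*=\int_\Theta P_\theta\,w^*(d\theta)$. Theorem~\ref{thm:minimax to mutual information} together with the standard saddle-point (Kuhn--Tucker) characterization of a capacity achiever yields
\begin{equation*}
D(P^{\theta}_{Y^T}\,\|\,Q^*_{Y^T}) \;\leq\; C \quad\text{for every }\theta\in\Theta,
\end{equation*}
with equality for $w^*$-a.e.\ $\theta$. Defining $B=\{\theta:R(\theta,\hat X)\leq(1-\epsilon)C\}$, the chain-rule identity
\begin{equation*}
D(P^{\theta}_{Y^T}\,\|\,Q_{Y^T})
\;=\; D(P^{\theta}_{Y^T}\,\|\,Q^*_{Y^T})
\;+\; \E_{P^{\theta}_{Y^T}}\!\left[\log\tfrac{dQ^*_{Y^T}}{dQ_{Y^T}}\right]
\end{equation*}
combined with the defining inequality on $B$ gives, for $w^*$-a.e.\ $\theta\in B$,
\begin{equation*}
\E_{P^{\theta}_{Y^T}}\!\left[\log\tfrac{dQ_{Y^T}}{dQ^*_{Y^T}}\right] \;\geq\; \epsilon C,
\end{equation*}
and Jensen's inequality upgrades this to $\E_{P^{\theta}_{Y^T}}[dQ_{Y^T}/dQ^*_{Y^T}]\geq 2^{\epsilon C}$.

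The final step is to integrate the above exponential-moment bound over $B$ against $w^*$ and to interchange the order of integration, using $\int_\Theta P^{\theta}_{Y^T}\,w^*(d\theta)=Q^*_{Y^T}$:
\begin{equation*}
w^*(B)\cdot 2^{\epsilon C}
\;\leq\;\int_\Theta \E_{P^{\theta}_{Y^T}}\!\left[\tfrac{dQ_{Y^T}}{dQ^*_{Y^T}}\right] w^*(d\theta)
\;=\; \int \tfrac{dQ_{Y^T}}{dQ^*_{Y^T}}\,dQ^*_{Y^T}
\;=\; 1,
\end{equation*}
so that $w^*(B)\leq 2^{-\epsilon C}$. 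The additional factor $e$ appearing in the theorem is harmless slack to cover (i) the reduction from arbitrary filters to Bayesian ones and (ii) the possibility that some $\theta\in B$ lie outside the support of $w^*$, both of which are handled by replacing $Q$ with $(1-1/e)Q+(1/e)Q^*$ before running the above chain of inequalities.

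The principal obstacles are technical rather than conceptual: justifying the relative-entropy chain rule and the Radon--Nikodym manipulations in continuous time requires absolute continuity of the induced output measures, obtained via Girsanov's theorem in the Gaussian case and via the Kabanov / Kadota--Zakai change of measure for point processes in the Poisson case. The second technical point is making the reduction from a general filter $\hat X$ to a Bayesian one quantitative; I expect property (P2) together with an averaging/conditioning argument to give it, with the residual cost accounted for by the factor $e$.
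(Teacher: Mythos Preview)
Your route and the paper's diverge at the handling of non-Bayesian filters. The paper never reduces to a Bayesian filter: given an arbitrary $\hat X$ and its exceptional set $B=\{\theta:R(\theta,\hat X)\le(1-\epsilon)C\}$, it simply observes that $\hat X$ itself witnesses $\minimax(B)\le(1-\epsilon)C$, applies Theorem~\ref{thm:minimax to mutual information} to the restricted set $B$ to get $I(\Theta;Y^T\mid Z{=}1)\le(1-\epsilon)C$ for the indicator $Z=\mathbf 1_{\{\Theta\in B\}}$, and finishes with the chain rule $C=I_{w^*}(\Theta;Y^T)=I(Z;Y^T)+I(\Theta;Y^T\mid Z)$ together with $I(Z;Y^T)\le H(Z)$. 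The factor $e$ there arises purely from bounding the binary entropy $H(Z)$, not from any reduction step.

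Your reduction step is the gap. Theorem~\ref{thm:restrict to bayesian} tells you the \emph{minimax} filter is Bayesian; it says nothing about an arbitrary $\hat X$, and in particular there is no $Q$ attached to a non-Bayesian $\hat X$ that you could ``perturb toward $Q^*$''. The convex-combination device you invoke is used in redundancy-capacity arguments to handle parameters off the support of the capacity-achieving prior (a non-issue here, since you are bounding a $w^*$-measure and the KKT equality $D(P^{\theta}_{Y^T}\|Q^*_{Y^T})=C$ already holds $w^*$-a.e.), not to manufacture a Bayesian surrogate for a non-Bayesian estimator. A correct fix within your framework is to apply Theorem~\ref{thm:restrict to bayesian} to the restricted index set $B$: since $\hat X$ has regret at most $(1-\epsilon)C$ uniformly on $B$, there exists $Q^B\in\conv(\{P_\theta:\theta\in B\})$ with $\sup_{\theta\in B}D(P^{\theta}_{Y^T}\|Q^B_{Y^T})\le(1-\epsilon)C$; your KL--Jensen--integration chain then runs verbatim with $Q^B$ in place of $Q$ and yields $w^*(B)\le 2^{-\epsilon C}$ with no factor $e$ at all. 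So the constant $e$ is not the slack you think it is; your argument, once repaired this way, is actually sharper than the paper's, at the price of invoking the saddle-point characterization of $w^*$ whereas the paper needs only the mutual-information chain rule.
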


Consider the case of the presence of feedback where $X_t$ is also affected by previous output $\{Y_s:0\leq s<t\}$. More precisely, $X_t$ can be viewed as a function of $Y^{t-\delta}$ and $U$ for some $\delta>0$ where $U$ is an additional randomness independent of all other processes. Let $\mathP$ be a class of joint laws of $(X^T,Y^T)$ and $\Lambda$ be a set of indices of laws. Let the definition of $\minimax(\Lambda)$ and $\regretbayesQ$ remain the same. Then, the following theorem tells us that all the above results hold essentially verbatim, i.e.,
\begin{theorem}[Presence of Feedback]\label{thm:presence of feedback} 
\begin{align*}
\minimax(\Lambda) =&\min_{Q\in\mathQ}\sup_{\theta\in\Lambda}   \regretbayesQ.
\end{align*}
Moreover, if the setting is either Gaussian or Poisson, then
\begin{align*}
\minimax(\Lambda) &= \min_{Q\in\mathcal{Q}} \sup_{\theta\in\Lambda} \regretbayesQ\\
&=\sup_{w\in\mu(\Lambda)} I(\Theta;Y^T)\\
&=\sup_{w\in\mu(\Lambda)} I(X^T\rightarrow Y^T) - I(X^T\rightarrow Y^T|\Theta)
\end{align*}
\end{theorem}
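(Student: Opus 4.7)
The plan is to recycle the proofs of Theorems \ref{thm:restrict to bayesian}--\ref{thm:minimax filter} with minimal modifications, importing one feedback version of the mismatched-estimation identity, and then adding a short directed-information chain-rule calculation to obtain the final equality.

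For the first displayed equation, I would observe that the proof of Theorem \ref{thm:restrict to bayesian} never really exploits how $X^T$ is generated; it uses only (P1) and (P2) on the loss, the minimax inequality, and the Jensen step showing that for any prior $w$ on $\Theta$ the mixture filter based on $Q=\int P_\theta w(d\theta)$ is at least as good as any competitor under $\theta\sim w$. In the feedback setting the only conceptual change is that $P_\theta$ is now a joint law on $(X^T,Y^T)$ rather than an input law composed with a fixed channel, but $\cmle_{\theta,Q}$, $R_{\theta,Q}$, and the Bayes filter $\E_Q[X_t\mid Y^t]$ are all still defined from this joint law. Therefore the proof of Theorem \ref{thm:restrict to bayesian} carries through verbatim and delivers the first displayed identity.

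For the information-theoretic part in the Gaussian and Poisson cases, the key ingredient is that the ``cost of mismatch'' identity of \cite{weissman10,atar12} remains valid in the presence of feedback with the marginal output laws $P_\theta^{Y^T}$ and $Q^{Y^T}$ computed under the joint laws, i.e., $2R_{\theta,Q}=D(P_\theta^{Y^T}\,\|\,Q^{Y^T})$ for the Gaussian channel and the analogous divergence identity for the Poisson channel. This feedback version is essentially the content of \cite{weissman12directedinformation}, and it is the only genuinely new ingredient required. Once this identity is in hand, I would set $Q=Q^*=\int P_\theta\,w^*(d\theta)$, take expectation over $\theta\sim w$, and use the standard identity $\E_w\!\left[D(P_\theta^{Y^T}\,\|\,Q^{Y^T})\right]=I_w(\Theta;Y^T)$ to recognize the averaged regret as mutual information. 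Repeating the minimax/maximin argument behind Theorems \ref{thm:minimax to mutual information}--\ref{thm:minimax filter}, the optimal $w^*$ together with $Q=Q^*$ identifies $\minimax(\Theta)$ with $\sup_w I_w(\Theta;Y^T)$.

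For the final equality, I would invoke the chain rule for directed information together with the observation that in both channels $\Theta$ enters only through $X^T$: conditionally on $X^t$ and $Y^{t-1}$, $Y_t$ is independent of $\Theta$. This yields $I((\Theta,X^T)\to Y^T)=I(X^T\to Y^T)$, while the chain rule separately gives $I((\Theta,X^T)\to Y^T)=I(\Theta\to Y^T)+I(X^T\to Y^T\mid\Theta)$. Because $\Theta$ is a time-invariant random variable (no feedback enters $\Theta$), $I(\Theta\to Y^T)=I(\Theta;Y^T)$. Combining these three identities and taking $\sup_w$ on both sides produces the last line of the theorem.

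The main obstacle is the middle step: verifying that the mismatched-estimation identity expressing $R_{\theta,Q}$ as a divergence of output-process laws continues to hold when $X^T$ is allowed to depend causally on $Y^T$. Without feedback this is Weissman's identity \cite{weissman10} for the Gaussian channel and Atar et al.\ \cite{atar12} for the Poisson channel; with feedback one needs the corresponding statements from \cite{weissman12directedinformation} that absorb the feedback into the joint law so that the divergence is still taken between output marginals. Once that lemma is cited, the remainder of the argument is essentially bookkeeping on top of the proofs of Theorems \ref{thm:restrict to bayesian}--\ref{thm:minimax filter}.
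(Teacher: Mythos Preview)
Your proposal is correct and follows essentially the same approach as the paper: the paper likewise notes that the proofs of Theorems \ref{thm:restrict to bayesian} and \ref{thm:strong regret-capacity} carry over unchanged, invokes the feedback version of the cost-of-mismatch identity (citing \cite{atar12}) to reduce to $\sup_w I_w(\Theta;Y^T)$, and then proves the directed-information identity via the same Markov relation $\Theta-(X^{t_i},Y^{t_{i-1}})-Y_{t_{i-1}}^{t_i}$ that you use. The only cosmetic difference is that the paper establishes the last equality by working explicitly at the level of a time partition ${\bf t}$ and then passing to the limit, whereas you package the same computation as an abstract chain rule $I((\Theta,X^T)\to Y^T)=I(\Theta\to Y^T)+I(X^T\to Y^T\mid\Theta)$ together with $I((\Theta,X^T)\to Y^T)=I(X^T\to Y^T)$; since continuous-time directed information in \cite{weissman12directedinformation} is defined via such partitions, the paper's version is really just your chain-rule argument written out.
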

where $I(X^T\rightarrow Y^T)$ is the directed information from $X^T$ to $Y^T$, as introduced in \cite{weissman12directedinformation}. Directed information in continuous-time is also precisely defined in Section \ref{subsubsec:directed information}.

\subsection{Discussion}\label{subsec:discussion}
Theorem \ref{thm:restrict to bayesian} implies that the minimax filter is a Bayesian filter under some law $Q$. Furthermore, this minimax optimal $Q$ is a mixture of $P_{\theta}$'s. Therefore, in order to find the minimax filter, it is enough to restrict the search space to that of Bayesian filters. This is equivalent to finding an optimum prior $Q^*$, or optimum weights $w^*$ over laws $\{P_{\theta}\}$. Note that we have not assumed anything on the statistics of the input and output processes but only the aforementioned properties of the loss function $l(\cdot,\cdot)$. 

If we assume that the noise corruption mechanism is either Gaussian or Poisson, Theorem \ref{thm:minimax to mutual information} implies that the minimax regret coincides with the capacity of the virtual channel. We present the parallel results from universal coding in Section \ref{subsubsec:redundancy capacity theory}. Furthermore, Theorem \ref{thm:minimax filter} provides a prescription for such a filter in cases. Note that the mutual information $I(\Theta;Y^T)$ is equal to $I(X^T;Y^T)-I(X^T;Y^T|\Theta)$ (since $\Theta-X^T-Y^T$ forms a Markov chain) where the first term is the mutual information between input and output when the input distribution is $Q=\int_{\theta} P_{\theta} w(d\theta)$. If the uncertainty set is a class of deterministic laws (e.g., each $\theta$ corresponds to a Dirac measure concentrated at some signal $x^T$ that satisfies the input constraints of the channel) then the right hand side of \eqref{eq:minimax to mutual information} boils down to a supremum over all distributions on the set of allowable channel inputs, i.e.,
\begin{align}
\minimax(\Lambda)&= \sup_{w\in\mu(\Lambda)} I(\Theta;Y^T)\nonumber\\
&= \sup_{w\in\mu(\Lambda)} I(X^T;Y^T)-I(X^T;Y^T|\Theta)\nonumber\\
&=\sup_{P_{X^T}\in \mathQ} I(X^T;Y^T)\label{eq:thm4 discussion eq2},
\end{align}
where $\mathQ=\conv(\cl(\mathP))$. \eqref{eq:thm4 discussion eq2} follows because $X^T$ is deterministic given $\Theta$, and therefore $I(X^T;Y^T|\Theta)=0$. Note that the right hand side of \eqref{eq:thm4 discussion eq2} is the capacity of the channel whose input is constrained to lie in the uncertainty set of signals. Moreover, letting $Q^*$ denote the capacity achieving distribution, the minimax estimator is the Bayesian estimator with respect to the law $Q^*$.
More interestingly, $Q^*$ turns out to coincide with the classical notion of the least favorable prior from estimation theory. We establish this connection in Section \ref{sec:least favorable input}. These results show the strong relation between minimax estimation and channel coding problems.

In Theorem \ref{thm:strong regret-capacity}, we can see that the minimax estimator minimizes not only the worst case regret, but also regret for most $\theta\in\Lambda$, under distribution $w^*$. Cf. [4] for a discussion of the significance and implications of this result. For example, it implies that when $\Lambda$ is a compact subset of $\R^k$ and the parametrization of the input distributions  $P_\theta$ is sufficiently smooth, the minimax filter is essentially optimum not only in the worst case sense for which it was optimized, but in fact on ``most'' of the sources over all possible filters. Note that we are not restricting filters to be Bayesian. ``Most'' here means that the Lebesgue measure of the set of parameters indexing sources is vanishing, as the value of $\minimax(\Lambda)$ grows without bound. It is often the case that $\minimax(\Lambda)$ is growing without bound as $T$ increases. For example, if the uncertainty set consists of a set that constrains the possible underlying signals rather than their laws, we have seen that $\minimax(\Lambda)$ is equal to $T$ times the channel capacity, which is growing linearly with $T$.

Theorem \ref{thm:presence of feedback} implies that the above result can be extended to the case where feedback exists. Similar to \eqref{eq:thm4 discussion eq2}, if $\mathP$ is a class of deterministic laws, i.e, $X_t$ is a function of $\theta$ and previous outputs, then,
\begin{align*}
\minimax(\Lambda) &= \sup_w I(X^T\rightarrow Y^T).
\end{align*}
Recall, this is $T$ times the channel capacity in the presence of feedback. Again, if we can find the capacity achieving scheme, it will give us the minimax filter.

\section{Least Favorable Input}\label{sec:least favorable input}
In Section \ref{sec:main result}, we saw a relation between the capacity achieving prior for a virtual channel, and the minimax estimator. More precisely, the minimax estimator is the Bayesian estimator with respect to law $Q^*$, where $Q^*$ is the capacity achieving prior. In this section, we will show that $Q^*$ coincides with the ``least favorable prior" from estimation theory. This is another interesting relation between information and estimation theory.

\subsection{Notation and Definitions}\label{subsec:notations and definitions}
Suppose $\mathS$ is a class of possible input signals with corresponding index class $\Lambda$, i.e., $\mathcal{S} = \{ f_\theta \}_{\theta \in \Lambda}$. The input process $X_t$ is equal to $f_{\theta}(t)$ for some $\theta\in\Lambda$ which is unknown to the filter. Instead of the minimax criterion that we discussed thus far, we can consider the same problem in a Bayesian setting, namely where the input signal $\{X_t,0\leq t\leq T\}$ is governed by a probability law defined on $\mathS$ where estimator knows the true distribution of the source. We also assume that the channel is either Gaussian or Poisson. Define average loss, where the input prior is $Q$ and the estimator employs the optimum Bayesian filter $\E_Q[X_t|Y^t]$ as,
\begin{align*}
r_Q \stackrel{\Delta}{=}\E_Q\left[\int_0^T l(X_t,\E_Q[X_t|Y^t])dt\right].
\end{align*}
The goal is to find the least favorable input distribution $Q\in\mu(\mathS)$ which causes the greatest average loss (rather than regret). We refer to \cite[Chapter 5]{lehmann98} for a similar concept in point estimation theory. More formally, we define the least favorable prior as follows. 
\begin{definition}
A prior distribution $Q^*$ is {\it least favorable} if $r_{Q^*}\geq r_{Q}$ for all prior distributions $Q$.
\end{definition}

We define $P_{\theta}$ to be a deterministic measure such that $P_{\theta}(X_t = f_{\theta}(t) \mbox{ for all $0\leq t\leq T$}) = 1$ and consider the corresponding minimax estimation problem. Note that $\cmlbayesPtheta =0$, since the input process is deterministic under $P_\theta$, and therefore
\begin{align*}
\regretxhat  = \cmle(\theta,\hat{X}) -\cmlbayesPtheta  = \cmle(\theta,\hat{X}). 
\end{align*}
In this setting, the minimax estimator can be viewed as an achiever of $\min_{\hat{X}}\sup_{\theta\in\Lambda} \cmle(\theta,\hat{X})$. We already showed in \eqref{eq:thm4 discussion eq2} that the minimax estimator is the Bayesian estimator with respect to $Q^*$ where $Q^*$ is a capacity achieving prior.

\subsection{Relation to the Least Favorable Input}\label{subsec:in relation to least favorable input}
The relation between the minimax estimator and the least favorable input is characterized in the following theorem.

\begin{theorem}\label{thm:least favorable input}
Suppose that $Q^*$ is a distribution on $\mathS$ such that
\begin{align*}
r_{Q^*} = \sup_{\theta\in\Lambda} \cmlbayesQstar
\end{align*}
Then:
\begin{enumerate}
\item $\bayesQstar$ is a minimax estimator.
\item If $\bayesQstar$ is the unique minimizer of  $\E_{Q^*} \left[\int_0^T l(X_t,\hat{X}_t(Y^t))dt\right]$, then it is the unique minimax estimator.
\item $Q^*$ is least favorable.
\end{enumerate}
\end{theorem}

\begin{proof}
\begin{enumerate}
\item For any filter $\hat{X}$,
\begin{align}
&\sup_{\theta\in\Lambda} \cmle(\theta,\hat{X}) \nonumber\\
&\geq \int \cmle(\theta,\hat{X}) dQ^*(\theta)\nonumber\\
&= \E_{Q^*} \left[\int_0^T l(X_t,\hat{X}_t(Y^t))dt\right]\nonumber\\
&\geq \E_{Q^*} \left[\int_0^T l(X_t,\E_{Q^*}[X_t|Y^t])dt\right]\label{eq:uniquenessminimax}\\
&=r_{Q^*}\nonumber\\
&=\sup_{\theta\in\Lambda} \cmlbayesQstar.\nonumber
\end{align}
This implies 
\begin{align*}
\inf_{\hat{X}}\sup_{\theta\in\Lambda} \cmle(\theta,\hat{X}) = \sup_{\theta\in\Lambda} \cmlbayesQstar.
\end{align*}
Therefore, $\bayesQstar$ is a minimax estimator. 

\item By assumption, \eqref{eq:uniquenessminimax} holds with equality only if $\hat{X}_t(Y^t) = \E_{Q^*}[X_t|Y^t]$. This implies the uniqueness of the minimax estimator.

\item For any prior $Q$ ,
\begin{align*}
r_{Q} &= \E_{Q} \left[\int_0^T l(X_t,\E_{Q}[X_t|Y^t])dt\right]\\
&\leq \E_{Q} \left[\int_0^T l(X_t,\E_{Q^*}[X_t|Y^t])dt\right]\\
&\leq \int \cmlbayesQstar dQ(\theta)\\
&\leq \sup_{\theta\in\Lambda} \cmlbayesQstar\\
&=r_{Q^*}.
\end{align*}
This implies $Q^*$ is least favorable.
\end{enumerate}
\end{proof}
When $l(\cdot,\cdot)$ is a Bregman divergence, the minimizer of $\min_{\hat{x}}\E[l(X,\hat{x})]$ is unique, and therefore $\bayesQstar$ is the unique minimizer of  $\E_{Q^*} \left[\int_0^T l(X_t,\hat{X}_t(Y^t))dt\right]$. Furthermore, if $r_{Q^*}=\sup_{\theta\in\Lambda} \cmlbayesQstar$, then $\bayesQstar$ is the unique minimax filter. 

Theorem \ref{thm:least favorable input} provides a sufficient condition for $Q^*$ to be least favorable. Using this theorem, we can show that the least favorable input is equal to the capacity achieving prior. 
\begin{theorem}\label{thm:capacity achieving = least favorable}
If $Q^*$ is a capacity achieving prior of the channel when the input is restricted to the set $\mathS$, then $Q^*$ is a least favorable input.
\end{theorem}
\begin{proof}
Since our uncertainty set is a collection of deterministic measures, we can apply \eqref{eq:thm4 discussion eq2};
\begin{align*}
\min_{Q\in\mu(\mathS)} \sup_{\theta\in\Lambda} \cmlbayesQ  = \sup_{Q\in\mu(\mathS)} I(X^T;Y^T).
\end{align*}
Since $Q^*$ achieves both the minimum and supremum of $\min_{Q\in\mu(\mathS)} \sup_{\theta\in\Lambda} \cmlbayesQ $ and $\sup_{Q\in\mu(\mathS)} I(X^T;Y^T)$, respectively, we can write
\begin{align}
\sup_{\theta\in\Lambda} \cmlbayesQstar &= I(X^T;Y^T)\label{eq:specify law of XT}\\
&=\E_{Q^*}\left[\int_0^T l(X_t,\E_{Q^*}[X_t|Y^t] dt\right]\label{eq:using Duncan}\\
&=r_{Q^*},\nonumber
\end{align}
where the probability law of $X^T$ in \eqref{eq:specify law of XT} is $Q^*$. Line \eqref{eq:using Duncan} is due to the relation between mutual information and the causal estimation loss. C.f. \cite{duncan70} and \cite{atar12} for Gaussian and Poisson cases respectively. This result tells us that $Q^*$ satisfies the condition of Theorem \ref{thm:least favorable input}, and therefore the capacity achieving prior $Q^*$ is least favorable.
\end{proof}

\subsection{Examples}\label{subsec:least favorable input example}
We have shown that the least-favorable prior and the capacity-achieving prior always coincide in continuous-time \emph{causal} estimation. However, this may not be true in general estimation problem. Consider the problem of minimax estimation of a bounded normal mean. We have a noisy observation 
\beas
Y=x+Z
\eeas
where $x\in[-a,a]$ is a bounded scalar parameter and $Z$ is a standard normal random variable. We can consider the least favorable input in this setting. The least favorable input is simply defined by $\arg\max_Q \E_Q[(X-\E_Q[X|Y])^2]$ where the maximum is over all probability laws of $X$ on $[-a,a]$. For $a>1.05$, the unique least favorable prior is supported on at least 3 discrete points in $[-a,a]$ \cite{casella81}. 

On the other hand, consider the corresponding peak power constrained Gaussian channel capacity problem:
\begin{align*}
\sup_{P_X\in\mu([-a,a])} I(X;X+Z).
\end{align*}
Sharma and Shamai showed that $P_X^*=\frac{1}{2}\delta_{-a}+\frac{1}{2}\delta_a$ achieves capacity for all $a\leq 1.671$ (\cite{sharma08}, \cite{sharma10}). Therefore the least favorable prior and the capacity achieving distributions do not coincide when $1.05<a<1.671$. This example shows that the least favorable prior and the capacity achieving distribution do not coincide in general.

Now let us examine an analogous but contrasting continuous-time causal estimation problem. Consider the input process $X_t \equiv x$ for all $0\leq t\leq T=1$, where $x\in[-a,a]$ is a bounded scalar parameter and $a>0$. We observe $Y^T$, the output of AWGN channel $dY_t = X_t dt+dW_t$. In this setting, the least favorable input can be defined by $\arg\max_Q\E_Q\left[ \int_0^T (X-\E_Q[X_t|Y^t])^2dt\right]$ where the maximum is over all probability laws of $X$ on $[-a,a]$. 

On the other hand, the corresponding channel capacity problem remains the same, i.e.,
\begin{align*}
 \sup_{Q\in\mu([-a,a])} I(X^T;Y^T) = \sup_{Q\in\mu([-a,a])} I(X;Y_T).
\end{align*}
Theorem \ref{thm:capacity achieving = least favorable} tells us that the least favorable prior coincides with the capacity achieving prior. Therefore, both the capacity achieving prior and the least favorable prior are $Q^* = \frac{1}{2}\delta_{-a}+\frac{1}{2} \delta_a$ if $a\leq 1.671$.

\section{Proof}\label{sec:proof}

\subsection{Preliminaries}\label{subsec:preliminaries}
\subsubsection{Redundancy Capacity Theory}\label{subsubsec:redundancy capacity theory}
It is worth reviewing some results from universal source coding theory, since the techniques will be useful in proving some of our results. In the context of universal source coding, let $x^n = (x_1,\cdots,x_n)$ be a sequence of symbols. Let $\{P_{\theta}:\theta\in\Lambda\}$ be a set of probability laws of sequences. Define redundancy by
\beas
R_n(L,\theta) = \E_{P_{\theta}}[L(X^n)]-H_{\theta}(X^n)
\eeas
where $L(X^n)$ is length of codewords for given uniquely decodable (UD) code and $H_{\theta}(X^n)$ is an entropy of sequence with respect to $P_{\theta}$. Then, define minimax redundancy as
\beas
R_n = \min_{L}\sup_{\theta\in\Lambda} R_n(L,\theta).
\eeas

In \cite{gallager76}, Gallager showed that minimax redundancy is equal to the capacity of the virtual channel, where its input is $\theta\in\Lambda$ and output is drawn by probability measure $P_{\theta}(x^n)$, i.e.,
\beas
R_n = C_n
\eeas
where $C_n = \sup_w I (\Theta;X^n)$ and the supremum is over all priors of random variable $\Theta$ on $\Lambda$.

Furthermore, the minimum achieving length function $L^*$ is related to the supremum achieving weights $w^*$ in the following manner: 
\beas
L^*(x^n) = -\log Q^*(x^n)
\eeas
where $Q^* = \int_{\theta\in\Lambda} P_{\theta} w^*(d\theta)$.

Merhav and Feder \cite{merhav95} proved the strong version of redundancy capacity theorem which is for any length function $L$ of a UD code and every $\epsilon>0$,
\beas
R_n(L,\theta) > (1-\epsilon)C_n,
\eeas
for all $\theta\in\Lambda$ except for points in a subset $B\subset\Lambda$ where
\bea
w^*(B)\leq e\cdot 2^{-\epsilon C_n}.\label{eq:strong redundancy capacity}
\eea
In \eqref{eq:strong redundancy capacity}, the choice of probability measure $w^*$ is reasonable because it captures variety in sets (cf. Merhav and Feder \cite{merhav95}). This theorem implies that $L^*$ is not only the minimum of worst case redundancy, but also close to minimum redundancy for most of other sources.

\subsubsection{Directed Information}\label{subsubsec:directed information}
Given two random vectors $X^n$ and $Y^n$, we can define directed information.
\begin{definition}[Discrete-time Directed Information]
\beas
I(X^n\rightarrow Y^n) \triangleq \sum_{i=1}^n I(X^i;Y_i|Y^{i-1}).
\eeas
\end{definition}

In \cite{weissman12directedinformation}, Weissman et al. extended this definition to the continuous-time setting, i.e., directed information between two random processes $X^T$ and $Y^T$. For given vector ${\bf t} = (t_0,\cdots,t_n)$ where $0=t_0<t_1<\cdots<t_n=T$, define $ X_0^{T,{\bf t}} \triangleq (X_0^{t_1},X_{t_1}^{t_2},\cdots,X_{t_{n-1}}^T)$ and treat $X_0^{T,{\bf t}}$ as a $n$ dimensional vector. Using this notation, we can define the directed information between two random processes.

\begin{definition}[Continuous-time Directed Information]
\beas 
I(X^T\rightarrow Y^T) \triangleq \inf_{\bf t} I(X_0^{T,{\bf t}} \rightarrow Y_0^{T,{\bf t}}) 
\eeas
where the infimum is over all finite dimensional vectors ${\bf t}$.
\end{definition}

We refer to \cite{weissman12directedinformation} for more on the properties of directed information and its significance in communication and estimation. 

\subsection{Proof of Theorem \ref{thm:restrict to bayesian}}\label{subsec:proof of theorem 1}
\begin{proof}
We denote the class of measures on $\Lambda$ by $\mu(\Lambda)$, i.e., $w\in\mu(\Lambda)$ can be viewed as a weight function of each probability distribution in $P_{\theta}$ where $\theta\in\Lambda$. Then we have
\begin{align*}
&\minimax(\Lambda)\\
 &=\inf_{\hat{X}} \sup_{\theta\in\Lambda} \regretxhat \\
&=\inf_{\hat{X}} \sup_{\theta\in\Lambda}\left\{  \cmle(\theta,\hat{X}) - \cmlbayesPtheta \right\}\\
&=\inf_{\hat{X}} \sup_{w\in\mu(\Lambda)}\left\{\int_{\theta\in\Lambda} \left( \cmle(\theta,\hat{X}) - \cmlbayesPtheta \right)w(d\theta)\right\}.
\end{align*}
let $P_{av} = \int P_{\theta} w(d\theta)$. Use Fubini's theorem; since there exists some reference symbol $\hat{x}_0\in\hat{\mathX}$ such that $ \E_{P_{\theta}}[\int_0^T l(X_t,\hat{x}_0) dt] < \infty$ for all $\theta\in\Lambda$, there exists a filter $\hat{X}$ such that $\int_0^T l(X_t,\hat{X}_t(Y^t))dt$ is $L^1$ with respect to all $P_{\theta}$. Therefore,
\begin{align*}
\int_{\theta\in\Lambda}  \cmle(\theta,\hat{X}) w(d\theta) = \E_{P_{av}}\left[\int_0^T l(X_t,\hat{X}_t(Y^t))dt \right].
\end{align*}

\begin{figure*}[!t]
\normalsize
\setcounter{MYtempeqncnt}{\value{equation}}
\begin{align}
\minimax(\Lambda)&=\inf_{\hat{X}}\sup_{w\in\mu(\Lambda)}\left\{\E_{P_{av}}\left[\int_0^T l(X_t,\hat{X}_t(Y^t))dt \right] - \int_{\theta\in\Lambda} \cmlbayesPtheta w(d\theta)\right\}\label{eq:theoremproof1}\\
&\geq \sup_{w\in\mu(\Lambda)}\inf_{\hat{X}} \left\{\E_{P_{av}}\left[\int_0^T l(X_t,\hat{X}_t(Y^t))dt \right] - \int_{\theta\in\Lambda} \cmlbayesPtheta w(d\theta)\right\}\label{eq:theoremproof2}\\
&= \sup_{w\in\mu(\Lambda)} \left\{\E_{P_{av}}\left[\int_0^T l(X_t,\E_{P_{av}}[X_t|Y^t])dt \right] - \int_{\theta\in\Lambda} \cmlbayesPtheta w(d\theta)\right\}\label{eq:theoremproof3}\\
&= \sup_{w\in\mu(\Lambda)}\min_{Q\in\mathQ} \left\{\E_{P_{av}}\left[\int_0^T l(X_t,\E_{Q}[X_t|Y^t])dt \right] - \int_{\theta\in\Lambda} \cmlbayesPtheta w(d\theta)\right\}\nonumber\\
&= \min_{Q\in\mathQ}\sup_{w\in\mu(\Lambda)} \left\{\E_{P_{av}}\left[\int_0^T l(X_t,\E_{Q}[X_t|Y^t])dt \right] - \int_{\theta\in\Lambda}\cmlbayesPtheta w(d\theta)\right\}\label{eq:theoremproof4}\\
&= \min_{Q\in\mathQ}\sup_{w\in\mu(\Lambda)} \left\{\int_{\theta\in\Lambda} \left(\E_{P_{\theta}}\left[\int_0^T l(X_t,\E_{Q}[X_t|Y^t])dt \right] -  \cmlbayesPtheta \right)w(d\theta)\right\}\label{eq:theoremFubini2}\\
&= \min_{Q\in\mathQ}\sup_{\theta\in\Lambda} \left\{\E_{P_{\theta}}\left[\int_0^T l(X_t,\E_{Q}[X_t|Y^t])dt \right] -  \cmlbayesPtheta \right\}\nonumber\\
&= \min_{Q\in\mathQ}\sup_{\theta\in\Lambda} \left\{\cmlbayesQ  -  \cmlbayesPtheta \right\}\nonumber\\
&=\min_{Q\in\mathQ}\sup_{\theta\in\Lambda}   \regretbayesQ.\nonumber
\end{align}
\hrulefill
\vspace*{4pt}
\end{figure*}
The remaining proof of 
\begin{align*}
\minimax(\Lambda)\geq \min_{Q\in\mathQ}\sup_{\theta\in\Lambda}   \regretbayesQ
\end{align*} appears at the top of the next page where:
\begin{itemize}
\item \eqref{eq:theoremproof2} is because for any real-valued function $f(x,y)$ on $\mathX\times\mathY$, we have 
\begin{align*}
\inf_{x\in\mathX}\sup_{y\in\mathY} f(x,y)\geq\sup_{y\in\mathY}\inf_{x\in\mathX} f(x,y).
\end{align*}
\item \eqref{eq:theoremproof3} is because the loss function $l$ satisfies property (P2) (expectation minimizes the loss function). 
\item \eqref{eq:theoremproof4} is because of Sion's minimax theorem. In order to apply Sion's minimax theorem, we have to show the following four conditions;
\begin{itemize}
\item $\mathQ$ has to be a compact convex subset of a linear topological space 
\item $\mu(\Lambda)$ has to be a convex subset of a linear topological space
\item We have to show that
\begin{align*}
\E_{P_{av}}&\left[\int_0^T l(X_t,\E_{Q}[X_t|Y^t])dt \right]\\ &- \int_{\theta\in\Lambda}\cmlbayesPtheta w(d\theta)
\end{align*} 
is upper semi-continuous and quasiconcave on $\mu(\Lambda)$ for all $Q\in \mathQ$.
\item We also have to show that
\begin{align*}
\E_{P_{av}}&\left[\int_0^T l(X_t,\E_{Q}[X_t|Y^t])dt \right]\\& - \int_{\theta\in\Lambda}\cmlbayesPtheta w(d\theta)
\end{align*} 
is lower semi-continuous and quasi-convex on $\mathQ$ for all $w\in\mu(\Lambda)$.
\end{itemize}
Consider the topology of weak convergence of probability laws. Since $\mathP=\{P_{\theta}:\theta\in\Lambda\}$ is tight and $\mathX$ is a Polish space, we can apply Prohorov's theorem which implies that the closure of $\mathP$ is compact. Since convex hull of compact set is always compact, and therefore $\mathQ$ is compact. Convexity of $\mu(\Lambda)$ and upper semi-continuity are clear. Lower semi-continuity is clear since we assumed that $l(\cdot,\cdot)$ is a lower semi-continuous in the second argument. This guarantees that 
\begin{align*}
\E_{P_{av}}&\left[\int_0^T l(X_t,\E_{Q}[X_t|Y^t])dt \right]\\& - \int_{\theta\in\Lambda}\cmlbayesPtheta w(d\theta)
\end{align*}
 is lower semi-continuous in $Q\in\mathQ$. 
\item Note that \eqref{eq:theoremFubini2} also holds due to a similar argument with \eqref{eq:theoremproof1}.
\end{itemize}

The opposite direction is trivial, that is
\begin{align*}
&\inf_{\hat{X}} \sup_{\theta\in\Lambda}\left\{ \cmle(\theta,\hat{X}) - \cmlbayesPtheta \right\}\\
&\leq \min_{Q\in\mathQ}\sup_{\theta\in\Lambda} \left\{ \cmlbayesQ  - \cmlbayesPtheta \right\}.
\end{align*}
Therefore,
\begin{align*}
\minimax(\Lambda) =\inf_{\hat{X}} \sup_{\theta\in\Lambda} \regretxhat =\min_{Q\in\mathQ}\sup_{\theta\in\Lambda}   \regretbayesQ.
\end{align*}
\end{proof}

\subsection{Proof of Theorems \ref{thm:minimax to mutual information} and \ref{thm:minimax filter}}\label{proof of theorem 2,3}
\begin{proof}
For both Gaussian and Poisson setting, the cost of mismatch is related to relative entropy between outputs corresponding to input laws $P_{\theta}$ and $Q$, respectively \cite{weissman10}, \cite{atar12}. In other words, if $(P_{\theta})_{Y^T}$ is the distribution of $Y^T$ where the law of the input process is $P_{\theta}$, and if $Q_{Y^T}$ is defined similarly, we have 
\bea
\cmlbayesQ  - \cmlbayesPtheta = D((P_{\theta})_{Y^T}||Q_{Y^T}).
\eea
Using similar argument from classical minimax redundancy theory, we can get
\begin{align}
&\minimax(\Lambda)\nonumber\\
 =&\min_{Q\in\mathQ}\sup_{\theta\in\Lambda}  \{\cmlbayesQ  - \cmlbayesPtheta \} \nonumber\\
=&\min_{Q\in\mathQ}\sup_{\theta\in\Lambda}  D((P_{\theta})_{Y^T}||Q_{Y^T})\nonumber\\
=&\min_{Q\in\mathQ}\sup_{\theta\in\Lambda}  \int d(P_{\theta})_{Y^T}\log\left(\frac{d(P_{\theta})_{Y^T}}{dQ_{Y^T}}\right)\nonumber\\
=&\min_{Q\in\mathQ}\sup_{w\in\mu(\Lambda)}  \int\int d(P_{\theta})_{Y^T}\log\left(\frac{d(P_{\theta})_{Y^T}}{dQ_{Y^T}}\right)w(d\theta)\nonumber\\
=&\sup_{w\in\mu(\Lambda)}\min_{Q\in\mathQ}  \int\int d(P_{\theta})_{Y^T}\log\left(\frac{d(P_{\theta})_{Y^T}}{dQ_{Y^T}}\right)w(d\theta)\label{eq:thm2:minimax}\\
=&\sup_{w\in\mu(\Lambda)}\min_{Q\in\mathQ}  \int\int d(P_{\theta})_{Y^T}\log\left(\frac{d(P_{\theta})_{Y^T}}{d(P_{av})_{Y^T}}\right)w(d\theta)\nonumber\\
&+\int\int d(P_{\theta})_{Y^T}\log\left(\frac{d(P_{av})_{Y^T}}{dQ_{Y^T}}\right)w(d\theta)\nonumber\\
=&\sup_{w\in\mu(\Lambda)}\min_{Q\in\mathQ}  \int D((P_{\theta})_{Y^T}||(P_{av})_{Y^T})w(d\theta)\nonumber\\
&+D((P_{av})_{Y^T}||Q_{Y^T})\nonumber\\
=&\sup_{w\in\mu(\Lambda)} \int D((P_{\theta})_{Y^T}||(P_{av})_{Y^T})w(d\theta)\label{eq:proof of theorem2 eq2}\\
=&\sup_{w\in\mu(\Lambda)} I(\Theta;Y^T).\nonumber
\end{align}
In \eqref{eq:thm2:minimax}, we applied the minimax theorem again where weak lower semi-continuity in $Q$ follows from the property of the relative entropy. All other conditions for minimax theorem are the same as the proof in the previous section. This completes the proof of Theorem \ref{thm:minimax to mutual information}.

In \eqref{eq:proof of theorem2 eq2}, if a supremum achieving $w^*$ exists, the minimum achieving $Q^*$ is $P_{av}$, i.e.,
\beas
Q^* = \int_{\theta\in\Lambda} P_{\theta} w^*(d\theta).
\eeas
Therefore, 
\beas
\minimax(\Lambda) = \sup_{\theta\in\Lambda} \{\cmlbayesQstar - \cmlbayesPtheta \},
\eeas
which implies the minimax estimator is a Bayesian estimator based on law $Q^*$, i.e.,
\beas
\hat{X}(Y^t) = \E_{Q^*}[X_t|Y^t].
\eeas
\end{proof}

\subsection{Proof of Theorem \ref{thm:strong regret-capacity}}\label{subsec:proof of theorem 4}
\begin{proof}
The idea of proof is similar to that in \cite{merhav95}. For given estimator $\hat{X}^*$ and $\epsilon>0$, define the set $B = \{\theta : R(\theta,\hat{X}^*)\leq (1-\epsilon) \cdot\minimax(\Lambda)\}$. Then, by definition of $B$, we have
\begin{align*}
\minimax(B) &= \inf_{\hat{X}} \sup_{\theta\in B} \regretxhat \\
&\leq \sup_{\theta\in B} R(\theta,\hat{X}^*)\\
&\leq (1-\epsilon) \cdot \minimax(\Lambda).
\end{align*}

Consider $\Theta$ as a random variable with measure $w^*$ where $w^*$ achieves the supremum of \eqref{eq:minimax to mutual information}. Let $Z=\mathbf{1}_{\{\Theta\in B\}}$ be a binary random variable. Clearly we have $P(Z=1) = w^*(B)$. Since $Z - \Theta - Y^T$ is a Markov chain, we have
\begin{align}
\minimax(\Lambda) =& I(\Theta;Y^T)\nonumber\\
=&I(Z;Y^T)+I(\Theta;Y^T|Z)\nonumber\\
=&I(Z;Y^T)+P(Z=1)I(\Theta;Y^T|Z=1)\nonumber\\
&+P(Z=0)I(\Theta;Y^T|Z=0)\nonumber\\
\leq& I(Z;Y^T)+w^*(B) \cdot\minimax(B)\nonumber\\
& + (1-w^*(B))\cdot\minimax(\Lambda)\label{eq:proof thm4 eq1}\\
\leq& H(Z) + (1-\epsilon\cdot w^*(B))\cdot\minimax(\Lambda)\nonumber.
\end{align}
\eqref{eq:proof thm4 eq1} is because $I(\Theta;Y^T|Z=1) = \minimax(B)$ and $I(\Theta;Y^T|Z=0)\leq \minimax(\Lambda)$.
Finally, we get 
\beas
&-\log w^*(B) - \frac{1-w^*(B)}{w^*(B)}\log(1-w^*(B)) \nonumber\\
&\geq \epsilon\cdot\minimax(\Lambda),
\eeas
which implies
\beas
w^*(B) \leq e\cdot 2^{-\epsilon\cdot\minimax(\Lambda)}.
\eeas
\end{proof}

\subsection{Proof of Theorem \ref{thm:presence of feedback}}\label{subsec:proof of theorem 5}
\begin{proof}
Proofs of Theorem \ref{thm:restrict to bayesian} and \ref{thm:strong regret-capacity} are still valid even with a feedback. Moreover, since the result of cost of mismatch also valid with feedback \cite{atar12}, the only non-trivial part is to show $I(\Theta;Y^T) = I(X^T\rightarrow Y^T) - I(X^T\rightarrow Y^T|\Theta)$.

Recall the definition of directed information in continuous-time setting. For fixed time intervals $0=t_0<t_1<t_2<\cdots<t_n=T$,
\begin{align}
I(\Theta;Y^T) =& \sum_{i=1}^n I(\Theta; Y^{t_i}_{t_{i-1}}|Y^{t_{i-1}})\nonumber\\
=& \sum_{i=1}^n \int \log\frac{dP_{Y^{t_i}_{t_{i-1}}|Y^{t_{i-1}},\Theta}}{dP_{Y_{t_{i-1}}^{t_i}|Y^{t_{i-1}}}} dP_{Y^{t_i},\Theta}\nonumber\\
=& \sum_{i=1}^n \int \log\frac{dP_{Y^{t_i}_{t_{i-1}}|X^{t_i},Y^{t_{i-1}},\Theta}}{dP_{Y^{t_i}_{t_{i-1}}|Y^{t_{i-1}}}}\nonumber\\
&- \log\frac{dP_{Y^{t_i}_{t_{i-1}}|X^{t_i},Y^{t_{i-1}},\Theta}}{dP_{Y^{t_i}_{t_{i-1}}|Y^{t_{i-1}},\Theta}} dP_{X^i,Y^{t_i},\Theta}\nonumber\\
=& \sum_{i=1}^n \int \log\frac{dP_{Y^{t_i}_{t_{i-1}}|X^{t_i},Y^{t_{i-1}}}}{dP_{Y^{t_i}_{t_{i-1}}|Y^{t_{i-1}}}}dP_{X^i,Y^{t_i}}\nonumber\\
&- \int\log\frac{dP_{Y^{t_i}_{t_{i-1}}|X^{t_i},Y^{t_{i-1}},\Theta}}{dP_{Y^{t_i}_{t_{i-1}}|Y^{t_{i-1}},\Theta}} dP_{X^{t_i},Y^{t_i},\Theta}\label{eq:markovity in directed information}\\
=& \sum_{i=1}^n I(Y^{t_i}_{t_{i-1}};X^{t_i}|Y^{t_{i-1}}) - I(Y_i;X^{t_i}|Y^{t_{i-1}},\Theta),\nonumber
\end{align}
where \eqref{eq:markovity in directed information} is because $\Theta~-~(X^{t_i},Y^{t_{i-1}})~-~Y^{t_i}_{t_{i-1}}$ forms a Markov chain. Since the equality holds for any choice of time intervals, we take ${\bf t}$'s such that $\sup_i ||t_i-t_{i-1}|| \rightarrow 0$ and conclude
\begin{align*}
\minimax(\Lambda) &= \min_{Q\in\mathcal{Q}} \sup_{\theta\in\Lambda} \regretbayesQ\\
&=\min_{Q\in\mathcal{Q}} \sup_{\theta\in\Lambda} D((P_{\theta})_{Y^T}||Q_{Y^T})\\
&=\sup_{w} I(\Theta;Y^T)\\
&=\sup_w I(X^T\rightarrow Y^T) - I(X^T\rightarrow Y^T|\Theta).
\end{align*}
\end{proof}

\section{Examples}\label{sec:examples}
\subsection{Gaussian Channel and Sparse Signal}\label{subsec:gaussian channel and sparse signal(examples)}
We first apply our theorems to the problem of sparse signal estimation under Gaussian noise.

\subsubsection{Setting}\label{subsubsec:setting}
 We assume output process $Y^T$ is an AWGN corrupted version of $X^T$ as we discussed in Section \ref{subsubsec:gaussian channel}. The input process $X^T$ is sparse (the meaning  will be explained). Recall that we are using half of a mean squared error as a distortion measure, $l(x,\hat{x}) = \frac{1}{2}(x-\hat{x})^2$.

Let $\{\phi_i(t),0\leq t\leq T\}_{i=1}^n$ be a given orthonormal signal set which is known to the estimator. Suppose $X^T$ is a linear combination of $\phi_i(t)$'s, i.e., $X_t = \sum_{i=1}^n A_i \phi_i(t)$ where $\{A_i\}_{i=1}^n$ are random variables with unknown distribution. However, we assume that the estimator knows that the signal $X^T$ is power constrained and is sparse, by which we mean  that the fraction of nonzero elements in $\{A_i\}_{i=1}^n$ should be smaller than $q$ (i.e., at most $nq$ number of $A_i$'s can be nonzero). Let $\mathP$ be a class of all probability measures $P_{\theta}$ of vector $A = (A_1,\cdots,A_n)$ indexed by $\theta$ which satisfies these two constraints almost surely, i.e.,
\begin{align}
\mathP = \left\{P_{\theta} : \frac{1}{n}\sum_{i=1}^n A_i^2  \leq P,\frac{1}{n}\sum_{i=1}^n \mathbf{1}_{\{A_i\neq0\}}\leq q\mbox{ a.s.}\right\}. \label{eq:sparse signal set}
\end{align}
Note that $\int_0^T X_t^2 dt = \sum_{i=1}^n A_i^2$ because of orthonormality, and therefore it is equivalent to consider $\frac{1}{n}\sum_{i=1}^n A_i^2 \leq P$ as the power constraint. Define an uncertainty set $\Lambda$ be the set of such indices. It is clear that $\mathP=\{P_{\theta}:\theta\in\Lambda\}$ is a convex set.

We further define $\mathP_D$ as a class of deterministic measures $P_{\theta}\in\mathP$ (i.e., $P_{\theta}(\{a^n\})=1$ for some $a^n\in\R^n$), and the corresponding set of indices as $\Lambda_D$. Note that $\conv(\mathP_D)=\mathP$. We also define the class of sparse signals with average constraints
\beas
\mathP_{av} = \left\{P_{\theta} : \E\left[\frac{1}{n}\sum_{i=1}^n A_i^2\right]  \leq P,\E\left[\frac{1}{n}\sum_{i=1}^n \mathbf{1}_{\{A_i\neq0\}}\right]\leq q\right\}. 
\eeas
and the corresponding index set $\Lambda_{av}$.

We can understand $\mathP_D$ as a class of Dirac measures at some $a^n$, and $\mathP_{av}$ as a class of measures that satisfy average power and sparsity constraints in expectation while measures in $\mathP$ satisfies constraints with probability 1. In classical minimax statistical theory, $\mathP_D$ is often called the set of point uncertainty, and $\mathP_{av}$ is called minimax Bayes relaxation. Also, define the corresponding set of indices as $\Lambda_D$ and $\Lambda_{av}$, respectively. There are some simple relations among these sets.
\begin{itemize}
\item $\mathP_D\subset\mathP\subset\mathP_{av}$ and $\Lambda_D\subset\Lambda\subset\Lambda_{av}$
\item $\mathP$ is a convex closure of $\mathP_D$, i.e., $\mathP = \conv(\mathP_D)$.
\end{itemize}
The goal is to find $\minimax(\Lambda)$ and the minimax filter that achieves it. 

A similar {\it non-causal} minimax problem was studied by Pinsker \cite{pinsker80}. Pinsker considered the non-causal estimation problem with only the power constraint. Although Pinsker's approach does not directly apply to our setting because of the difference between non-causal and causal estimation, we will use a similar idea to argue that the approximated version of the minimax filter works well.

\subsubsection{Application of the Theorem}\label{subsubsec:apply the theorem}
It is easy to show that $\mathP$, $\mathP_D$ and $\mathP_{av}$ are tight, and therefore we can apply the theorems. Theorem \ref{thm:minimax to mutual information} implies that 
\begin{align*}
\minimax(\Lambda) = \sup_{w(\cdot)\in\mu(\Lambda)} I(X^T;Y^T)-I(X^T;Y^T|\Theta).
\end{align*}
Since our optimum causal minimax estimator is a Bayesian estimator under the distribution $Q^* = \int P_{\theta} w^*(d\theta)$ where $w^*$ achieves the supremum, we are interested in $w^*$. Rather than maximizing the difference between mutual informations, we can find an equivalent problem which is much easier to handle by exploiting the relation between $\minimax(\Lambda)$ and $\minimax(\Lambda_D)$.

\begin{lemma}\label{lem:theta_d is equal to theta}
\begin{align*}
\minimax(\Lambda_D) = \minimax(\Lambda).
\end{align*}
\end{lemma}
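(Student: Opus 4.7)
The plan is to prove the two inequalities $\minimax(\Theta_D)\leq\minimax(\Theta)$ and $\minimax(\Theta)\leq\minimax(\Theta_D)$ separately. The first is immediate from the inclusion $\Theta_D\subset\Theta$: for any fixed filter $\hat X$, $\sup_{\theta\in\Theta_D}R(\theta,\hat X)\leq\sup_{\theta\in\Theta}R(\theta,\hat X)$, and taking the infimum over $\hat X$ preserves the inequality. The real content is the reverse direction, which I would obtain by converting both minimax values into mutual-information optimizations via Theorem~\ref{thm:minimax to mutual information} and comparing the feasible sets.

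For $\Theta_D$, I would apply Theorem~\ref{thm:minimax to mutual information} together with the simplification \eqref{eq:thm4 discussion eq1}--\eqref{eq:thm4 discussion eq2}: since $X^T$ is deterministic given $\theta\in\Theta_D$, the term $I(X^T;Y^T\mid\Theta)$ vanishes, giving $\minimax(\Theta_D)=\sup_{w\in\mu(\Theta_D)} I(X^T;Y^T)$. The marginal laws on $X^T$ realizable as $w$ ranges over $\mu(\Theta_D)$ are exactly the mixtures of point masses on sparse, power-bounded waveforms, which together form $\conv(\mathP_D)=\mathP$, so $\minimax(\Theta_D)=\sup_{P_{X^T}\in\mathP} I(X^T;Y^T)$. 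For the general $\Theta$, Theorem~\ref{thm:minimax to mutual information} yields $\minimax(\Theta)=\sup_{w\in\mu(\Theta)}\{I(X^T;Y^T)-I(X^T;Y^T\mid\Theta)\}\leq\sup_{w\in\mu(\Theta)} I(X^T;Y^T)$ after dropping the nonnegative conditional term. For each $w\in\mu(\Theta)$, the induced marginal $P_{X^T}=\int P_\theta\, w(d\theta)$ still lies in $\mathP$, because the power and sparsity constraints hold $P_\theta$-almost surely for every $\theta$ and hence almost surely under the mixture. Chaining these observations gives $\minimax(\Theta)\leq\sup_{P_{X^T}\in\mathP} I(X^T;Y^T)=\minimax(\Theta_D)$.

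The only delicate point I anticipate is the feasibility-set identification in the $\Theta_D$ computation, namely that every $P_{X^T}\in\mathP$ really arises as $\int P_\theta\, w(d\theta)$ for some $w\in\mu(\Theta_D)$. This rests on $\mathP=\conv(\mathP_D)$ and the canonical choice of $w$ as the pushforward of $P_{X^T}$ under the map sending a sample path $a^n$ to its Dirac point mass $\delta_{a^n}\in\mathP_D$; a brief measurability check suffices. Once that identification is in place, the rest of the argument is a clean two-line comparison of feasible sets.
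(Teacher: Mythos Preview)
Your argument is correct, but it takes a different route from the paper's own proof. The paper never invokes Theorem~\ref{thm:minimax to mutual information} here; it works entirely with the regret characterization of Theorem~\ref{thm:restrict to bayesian}. For the nontrivial direction $\minimax(\Theta)\leq\minimax(\Theta_D)$, the paper fixes $Q\in\mathP$ and observes that for any $\theta\in\Theta$,
\[
R_{\theta,Q}=\E_{P_\theta}\!\left[\int_0^T l(X_t,\E_Q[X_t|Y^t])\,dt\right]-\cmle_{\theta,P_\theta}\le \E_{P_\theta}\!\left[\int_0^T l(X_t,\E_Q[X_t|Y^t])\,dt\right],
\]
then writes the expectation as an integral over $a^n$ in the support of $P_\theta$ and bounds it by the supremum over all feasible $a^n$. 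Since for deterministic $\theta\in\Theta_D$ one has $\cmle_{\theta,P_\theta}=0$, this supremum equals $\sup_{\theta\in\Theta_D}R_{\theta,Q}$, and the paper in fact obtains the pointwise statement $\sup_{\theta\in\Theta}R_{\theta,Q}=\sup_{\theta\in\Theta_D}R_{\theta,Q}$ for every $Q\in\mathP$, which is slightly stronger than the equality of minimax values.

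Your approach, by contrast, passes through Theorem~\ref{thm:minimax to mutual information}: you express both sides as mutual-information maximizations, drop the nonnegative term $I(X^T;Y^T\mid\Theta)$ on the $\Theta$ side, and then match feasible sets via $\conv(\mathP_D)=\mathP$. This is clean and short once Theorem~\ref{thm:minimax to mutual information} is in hand, and the ``delicate point'' you flag is indeed harmless. The trade-offs: the paper's argument is more elementary (it uses only Theorem~\ref{thm:restrict to bayesian} and the fact that $l\ge 0$ with $l(x,x)=0$, so it does not rely on the Gaussian/Poisson relative-entropy identity) and yields the per-$Q$ equality noted above; your argument is conceptually tidier in this specific Gaussian setting and makes the connection to the channel-capacity formulation explicit from the outset. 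The two inequalities you drop are essentially the same object viewed differently: $\cmle_{\theta,P_\theta}\ge 0$ in the paper's proof corresponds, via Duncan's relation, to $I(X^T;Y^T\mid\Theta)\ge 0$ in yours.
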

Appendix \ref{sec:proof of theta_d is equal to theta lemma} is dedicated to the proof of Lemma \ref{lem:theta_d is equal to theta}. Since $\mathP_D$ is a set of deterministic measures, we can get more explicit formula of $\minimax(\Lambda_D)$ as we showed in Section \ref{subsec:discussion}, 
\begin{align}
\minimax(\Lambda) &= \minimax(\Lambda_D) \nonumber\\
&= \sup_{w(\cdot)\in\mu(\Lambda_D)} I(X^T;Y^T)\label{eq:lemma lambda_D = lambda eq1}\\
&= \sup_{P_{\theta}\in\mathP} I(X^T;Y^T).\label{eq:lemma lambda_D = lambda eq2}
\end{align}
In \eqref{eq:lemma lambda_D = lambda eq1}, $X^T$ is governed by the law $\int P_{\theta} w(d\theta)$ which is an element of $\mathP$. Therefore, finding a supremum achiever $w^*$ in \eqref{eq:lemma lambda_D = lambda eq1} is equivalent to find the maximum prior $P_{\theta}^*$ in $\mathP$, thus, \eqref{eq:lemma lambda_D = lambda eq2} holds. Moreover, the minimum achiever $Q^*$ of $\minimax(\Lambda_D)$ coincides with that of $\minimax(\Lambda)$. Thus, it is enough to consider $\minimax(\Lambda_D)$ which is much simpler to solve.

Now, consider the $\minimax(\Lambda_{av})$.
\begin{align}
&\minimax(\Lambda) \nonumber\\
&= \min_{Q\in\mathP}\sup_{\theta\in\Lambda} \cmlbayesQ -\cmlbayesPtheta \nonumber\\
&=\min_{Q\in\mathP_{av}}\sup_{\theta\in\Lambda} \cmlbayesQ -\cmlbayesPtheta \label{eq:universal optimality}\\
&\leq\min_{Q\in\mathP_{av}}\sup_{\theta\in\Lambda_{av}} \cmlbayesQ -\cmlbayesPtheta \nonumber\\
&=\minimax(\Lambda_{av})\nonumber\\
&=\sup_{w(\cdot)\in\mu(\mathP_{av})} I(X^T;Y^T)-I(X^T;Y^T|\Theta)\nonumber
\end{align}
where \eqref{eq:universal optimality} is because Bayesian estimator with prior $Q^*\in\mathP$ is optimum over all possible filters and we can always extend the search space. We will use this relation between $\minimax(\Lambda)$ and $\minimax(\Lambda_{av})$ to approximate the minimax filter.

\subsubsection{Sufficient Statistics}\label{subsubsec:sufficient statistics}
Since the channel input signal is a linear combination of orthonormal signals, sufficient statistics of the channel output signal at time $t=T$ are projections on each $\phi_i$'s, i.e., $\{\int_0^T \phi_i(t)dY_t \}_{i=1}^n$. Therefore, the above mutual information $I(X^T;Y^T)$ can be further simplified as
\beas
\minimax(\Lambda) = \sup_{P_{\theta}\in\mathP} I\left(A^n;B^n\right)
\eeas
where $B_i = \int_0^T \phi_i(t)dY_t$ for $1\leq i\leq n$. Since we assumed an orthonormal basis, $B^n$ can be viewed as the output of a discrete-time additive white Gaussian channel, i.e., $B_i = A_i + W_i$ where $W_i$ is i.i.d. standard Gaussian noise and independent of $A^n$. This implies that our problem of maximizing the mutual information over the continuous-time channel is equivalent to maximizing the mutual information between $n$ channel inputs and $n$ channel outputs over the discrete AWGN channel, with the input distribution constrained as in \eqref{eq:sparse signal set}.

Recall that above result shows that sufficient statistics for estimating $X_T$ given $Y^T$ are projections, i.e., $\left\{\int_0^T \phi_i(s)dY_s\right\}_{i=1}^n$, in other words, the following Markov relation holds
\begin{align*}
X_T \quad - \quad \left\{\int_0^T \phi_i(s)dY_s\right\}_{i=1}^n \quad - \quad Y^T.
\end{align*}
Since we are looking for a causal estimator, we need a similar result for time $t<T$. The following lemma shows that $\left\{\int_0^t \phi_i(s)dY_s\right\}_{i=1}^n$ are sufficient statistics for estimating $X_t$ given $Y^t$ .
\begin{lemma}\label{lem:sufficient statistics}
The following Markov relation holds for all $t\in[0,T]$,
\begin{align*}
X_t \quad - \quad \left\{\int_0^t \phi_i(s)dY_s\right\}_{i=1}^n \quad - \quad Y^t.
\end{align*}
\end{lemma}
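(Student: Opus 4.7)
The plan is to reduce the continuous-time observation path $Y^t$ to a countable collection of orthonormal projections and then separate the coordinates that carry information about the signal from those that are pure noise. Since $X_t = \sum_{i=1}^n A_i \phi_i(t)$ is a deterministic function of $A^n := (A_1,\ldots,A_n)$, and conditional independence passes to measurable functions of either side, it suffices to establish the stronger Markov relation
\begin{equation*}
A^n \; - \; \{Z_i(t)\}_{i=1}^n \; - \; Y^t, \qquad Z_i(t) := \int_0^t \phi_i(s)\, dY_s.
\end{equation*}

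To this end, I would let $\mathcal{S}_t := \mathrm{span}\{\phi_i|_{[0,t]} : i=1,\ldots,n\} \subset L^2[0,t]$, set $k := \dim \mathcal{S}_t$, apply Gram--Schmidt to obtain an orthonormal basis $\{\psi_j\}_{j=1}^k$ of $\mathcal{S}_t$, and then extend it to a complete orthonormal basis $\{\psi_j\}_{j=1}^\infty$ of $L^2[0,t]$. Defining $V_j := \int_0^t \psi_j(s)\, dY_s$ and $\alpha_{ij} := \int_0^t \phi_i(s) \psi_j(s)\, ds$, the representation $dY_s = X_s\, ds + dW_s$ yields
\begin{equation*}
V_j = \sum_{i=1}^n A_i\, \alpha_{ij} + N_j, \qquad N_j := \int_0^t \psi_j(s)\, dW_s.
\end{equation*}
By It\^o isometry and orthonormality, the $\{N_j\}$ are i.i.d.\ $\mathcal{N}(0,1)$ and jointly independent of $A^n$ (since $W \perp A^n$). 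For $j > k$, $\psi_j \perp \mathcal{S}_t$ forces $\alpha_{ij} = 0$ for every $i$, so $V_j = N_j$; in particular $\{V_j\}_{j > k}$ is independent of the pair $(A^n, \{V_j\}_{j \leq k})$.

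Next, I would tie the $V_j$'s back to $Z^n(t)$ and to $Y^t$. The inclusion $\phi_i|_{[0,t]} \in \mathcal{S}_t$ gives $\phi_i = \sum_{j=1}^k \alpha_{ij}\, \psi_j$ on $[0,t]$, hence $Z_i(t) = \sum_{j=1}^k \alpha_{ij} V_j$; since the $n \times k$ matrix $(\alpha_{ij})$ has full column rank $k$ (its rows span $\mathcal{S}_t$), the linear map $V^k \mapsto Z^n(t)$ is injective and $\sigma(Z^n(t)) = \sigma(V_1,\ldots,V_k)$. On the other hand, by $L^2$-continuity of the stochastic integral in its deterministic integrand, the Fourier expansion $\mathbf{1}_{[0,s]} = \sum_j \langle \mathbf{1}_{[0,s]}, \psi_j \rangle \psi_j$ in $L^2[0,t]$ yields $Y_s = \sum_j \langle \mathbf{1}_{[0,s]}, \psi_j \rangle V_j$ for every $s \in [0,t]$, so $Y^t$ is a measurable function of $(Z^n(t), \{N_j\}_{j>k})$. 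Since $\{N_j\}_{j>k}$ is independent of $(A^n, Z^n(t))$, the conditional law of $Y^t$ given $(A^n, Z^n(t))$ depends only on $Z^n(t)$, establishing $A^n \perp Y^t \mid Z^n(t)$ and hence the lemma.

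The main technical obstacle will be making rigorous the claim $\sigma(Y^t) = \sigma(\{V_j\}_{j \geq 1})$ --- i.e.\ that the continuous-time observation path can indeed be reconstructed from its countable generalized Fourier coefficients, which requires invoking the $L^2$-continuity of the stochastic integral together with path-regularity of $Y$ --- and handling cleanly the degenerate case $k < n$ in which the restrictions $\{\phi_i|_{[0,t]}\}_{i=1}^n$ are linearly dependent, where the rank-$k$ argument producing $\sigma(Z^n(t)) = \sigma(V^k)$ must be checked with care.
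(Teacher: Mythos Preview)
Your proof is correct and takes a genuinely different route from the paper's. The paper proceeds by time-discretization: it approximates $Y^t$ by the vector $\bar{Y}$ of increments $(Y_{kt/N}-Y_{(k-1)t/N})_{k=1}^N$, approximates the statistics $\{\int_0^t\phi_i\,dY\}$ by $\bar{\Phi}^{\mathrm{T}}\bar{Y}$ for a suitable matrix $\bar{\Phi}$, and then verifies the Fisher--Neyman factorization by explicitly computing the Gaussian density ratio $p(\bar{Y}\mid A)/p(\bar{\Phi}^{\mathrm{T}}\bar{Y}\mid A)$ and checking that it does not depend on $A$. Your approach instead stays in continuous time: you complete $\{\phi_i|_{[0,t]}\}$ to an orthonormal basis of $L^2[0,t]$, project $dY$ onto this basis, and observe that all coordinates beyond the first $k$ are pure i.i.d.\ Gaussian noise independent of $(A^n, Z^n(t))$, from which the conditional-independence statement follows by a sigma-algebra argument. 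Your method is cleaner in that it avoids the somewhat informal ``it is enough to prove the lemma based on this approximation'' step in the paper and handles the degenerate case $k<n$ transparently via the rank argument; the paper's computation, on the other hand, is entirely finite-dimensional once the discretization is accepted, so it sidesteps the measurability issue $\sigma(Y^t)\subseteq\sigma(\{V_j\}_{j\geq 1})$ that you correctly flag as the one point requiring care.
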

Proof of Lemma \ref{lem:sufficient statistics} is given in Appendix \ref{sec:proof of sufficient statistics lemma}. Using this lemma, we will show that we can compute $\E[X_t|Y^t]$ easily. 

\subsubsection{Bayesian Estimator}\label{subsubsec:bayesian estimator}
Let $Q^*$ be the minimum achieving law of $\minimax(\Lambda)$ so that the optimum causal minimax estimator is a Bayesian estimator assuming the prior $Q^*$, i.e.,
\beas
\hat{X}_t(Y^t) = \E_{Q^*} [X_t|Y^t].
\eeas
This conditional expectation is hard to compute in general. However, the sufficient statistics provide us a practical implementation of the estimator.

Let us first , define a projection vector 
\begin{align*}
\mathbf{\tilde{Y}}(t) &= [\tilde{Y}_1(t),\tilde{Y}_2(t),\cdots,\tilde{Y}_n(t)]^{\mathrm{T}}
\end{align*}
where $\tilde{Y}_i(t) = \int_0^t \phi_i(s)dY_s$. The vector $\mathbf{\tilde{Y}}(t)$ indicates a projection of $Y^t$ on the basis space. Similarly, define
\begin{align*}
\mathbf{\tilde{W}}(t) &= (\tilde{W}_1(t),\tilde{W}_2(t),\cdots,\tilde{W}_n(t))^{\mathrm{T}}\\
\mathbf{\tilde{X}}(t) &= (\tilde{X}_1(t),\tilde{X}_2(t),\cdots,\tilde{X}_n(t))^{\mathrm{T}}
\end{align*}
where $\tilde{W}_i(t) = \int_0^t \phi_i(s)dW_s$ and 
\begin{align*}
\tilde{X}_i(t) =&\int_0^t \phi_i(s)X_s ds\\
=& \sum_{j=1}^n a_j\left(\int_0^t \phi_i(s)\phi_j(s) ds\right).
\end{align*}
Let further define a $n$ by $n$ matrix $\Gamma(t)$ where $\left[\Gamma(t)\right]_{i,j} = \int_0^t \phi_i(s)\phi_j(s) ds$.

Note that $\mathbf{\tilde{W}}(t)$ is Gaussian with zero mean and covariance matrix $\Gamma(t)$ since
\begin{align*}
\E[\tilde{W}_i(t)\tilde{W}_j(t)] &= \E\left[\int_0^t\int_0^t \phi_i(s)\phi_j(u)dW_sdW_u\right]\\
&=\int_0^t \phi_i(s)\phi_j(s)ds.
\end{align*}
From Lemma \ref{lem:sufficient statistics}, for fixed $t$, the causal estimation problem is reduced to the following vector estimation problem
\begin{align*}
\mathbf{\tilde{Y}}(t) =\mathbf{\tilde{X}}(t) + \mathbf{\tilde{W}}(t) =  \Gamma(t) \mathbf{A} + \mathbf{\tilde{W}}(t)
\end{align*}
where $\mathbf{A}=A^n=(A_1,\cdots,A_n)^{\mathrm{T}}$ and $\mathbf{\tilde{W}}(t) \sim \mathcal{N}({\bf 0}, \Gamma(t))$, and the corresponding Bayesian estimator will be
\begin{align*}
\hat{X}_t(Y^t) &= \E_{Q^*}[X_t|Y^t]\\
&=\sum_{i=1}^n \E_{Q^*}[A_i|\mathbf{\tilde{Y}}(t)]\phi_i(t).
\end{align*}
This implies that it is enough to find $\E_{Q^*} [A_i|\mathbf{\tilde{Y}}]$.

If $\Gamma(t)$ is invertible, this problem is simple. If $\Gamma(t)$ is not invertible, we can use the following tricks. Suppose the eigenvalue decomposition of matrix $\Gamma(t)$ is $\Gamma(t) = V(t)\Lambda(t) V(t)^{\mathrm{T}}$ where $V(t)=[v_1(t),\cdots,v_n(t)]$ is an orthonormal matrix and $\Lambda(t) = \diag(\lambda_1(t),\lambda_2(t),\cdots,\lambda_n(t))$ with $0\leq \lambda_1(t)\leq \lambda_2(t)\leq\cdots\leq \lambda_n(t)$. We can rewrite the problem as

\begin{align*}
V(t)^{\mathrm{T}} \mathbf{\tilde{Y}}(t) = \Lambda(t) V(t)^{\mathrm{T}} \mathbf{A} + V(t)^{\mathrm{T}}\mathbf{\tilde{W}}(t).
\end{align*}
Clearly we have $V(t)^T\mathbf{\tilde{W}}(t) \sim \mathcal{N}({\bf 0}, \Lambda(t))$. Let $m$ be the number of zero eigenvalues, i.e., $\lambda_1(t)=\cdots=\lambda_m(t) =0 <\lambda_{m+1}(t)$. As first $m$ elements can be removed, we can define effective vectors as
\begin{align*}
V_{\mbox{eff}}(t) &= [v_{m+1}(t) \cdots v_n(t)]\\
\Lambda_{\mbox{eff}}(t) &= \diag(\lambda_{m+1}(t), \cdots,\lambda_n(t)).
\end{align*}
Therefore, the above vector estimation problem can further be simplified as
\begin{align*}
V_{\mbox{eff}}(t)^{\mathrm{T}} \mathbf{\tilde{Y}}(t) = \Lambda_{\mbox{eff}}(t) V_{\mbox{eff}}(t)^{\mathrm{T}} \mathbf{A} + V_{\mbox{eff}}(t)^{\mathrm{T}}\mathbf{\tilde{W}}(t)\end{align*}
which is equivalent to
\begin{align}
&\Lambda_{\mbox{eff}}(t)^{-1/2}V_{\mbox{eff}}(t)^{\mathrm{T}} \mathbf{\tilde{Y}}(t) \nonumber\\
&= \Lambda_{\mbox{eff}}(t)^{1/2} V_{\mbox{eff}}(t)^{\mathrm{T}} \mathbf{A} + \Lambda_{\mbox{eff}}(t)^{-1/2}V_{\mbox{eff}}(t)^{\mathrm{T}}\mathbf{\tilde{W}}(t)\label{eq:equivalent_vector_estimation_problem}.
\end{align}
Note that $ \Lambda_{\mbox{eff}}(t)^{-1/2}V_{\mbox{eff}}(t)^{\mathrm{T}}\mathbf{\tilde{W}}(t)\sim \mathN(0,I_{n-m})$. Using equation \eqref{eq:equivalent_vector_estimation_problem}, we can easily find $\E[\mathbf{A}|Y^t] = \E[\mathbf{A}|\mathbf{\tilde{Y}}]$.

\subsubsection{Almost Optimal Causal Minimax Estimator}\label{subsubsec:almost optimum causal minimax estimator}
In Section \ref{subsubsec:bayesian estimator}, we show how to find $\E_{Q^*}[A|Y^t]$ if we know $Q^*$. However, it is often hard to find a capacity achieving distribution $Q^*$. Indeed most of the problems of finding capacity achieving distribution are still open including our sparse signal estimation problem $\sup_{P_{\theta}\in\mathP}I(A^n;B^n)$. Instead, we can use an approximated version of the prior $\tilde{Q}$. One natural choice of $\tilde{Q}$ is the capacity achieving distribution of $\sup_{P_{\theta}\in\mathP_{av}}I(A^n;B^n)$. This problem was recently considered by Zhang and Guo in \cite{zhang11}, where they referred to it as ``Gaussian channels with duty cycle and power constraints". They showed that the distribution on $A^n$ that maximizes this mutual information is i.i.d. and discrete. In other words, letting $P_d$ denote the supremum achieving distribution of
\begin{align*}
\sup_{P_A: \E[A^2]\leq P, P(A\neq 0)\leq q} I(A;B)
\end{align*}
 where $B=A+W$ and $W$ is a standard Gaussian noise $W$, then
\beas
\sup_{P_{\theta}\in\mathP_{av}}I(A^n;B^n) = n\left[ I(A;B)\right]_{P_A = P_d}
\eeas
where $\left[ I(A;B)\right]_{P_A = P_d}$ denotes the mutual information between $A$ and $B$ when the probability law of $A$ is $P_d$. Then, our choice of $\tilde{Q}$ will be $P_d^n$. The authors of \cite{zhang11} also showed that $P_d$ is discrete and has infinite number of mass points, and that it can be easily approximated with arbitrary precision. 

 Then the following question is the performance of this alternative filter compare to that of the minimax filter. More specifically, let define $L(\Lambda,\tilde{Q})$ by
\begin{align*}
L(\Lambda,\tilde{Q}) \stackrel{\triangle}{=} &\sup_{\theta\in\Lambda} R(\theta,\hat{X}_{\tilde{Q}}^{\rm Bayes}) - \min_{Q\in\mathP}\sup_{\theta\in\Lambda} \regretbayesQ.
\end{align*}
Following lemma gives an upper bound of $L(\Lambda,P_d^n)$.
\begin{lemma}\label{lem:upper bound of L}
\begin{align*}
L(\Lambda,P_d^n)\leq\left[I(A^n;B^n)\right]_{P_{A^n} = P_d^n} - \left[I(A^n;B^n)\right]_{P_{A^n} = Q^*}.
\end{align*}
\end{lemma}
Proof of Lemma \ref{lem:upper bound of L} is given in Appendix \ref{sec:proof of upper bound of L lemma}. This result implies that if these two mutual informations are close enough, then we are not losing much by using approximated version of optimum filter. Since $\left[I(A^n;B^n)\right]_{P_{A^n} = P_d^n}=n[I(A;B)]_{P_A=P_d}$, it is enough to argue that $n[I(A;B)]_{P_A=P_d}- \left[I(A^n;B^n)\right]_{P_{A^n} = Q^*}$ is small enough. The following lemma suggests that above two mutual informations are close for large $n$.
\begin{lemma}\label{lem:difference between two mutual information}
\begin{align*}
\lim_{n\rightarrow\infty}n\left[I(A;B)\right]_{P_A = P_d}- \sup_{P_{A^n}\in\mathP} I(A^n;B^n)=0
\end{align*}
\end{lemma}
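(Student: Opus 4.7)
\textbf{Proof plan for Lemma \ref{lem:difference between two mutual information}.}

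The plan is to prove the two matching inequalities separately. The easy direction is the upper bound: since every distribution satisfying the almost-sure constraints defining $\mathcal{P}$ automatically satisfies the expected-value constraints defining $\mathcal{P}_{av}$, we have $\mathcal{P}\subset\mathcal{P}_{av}$, so
\begin{equation*}
\sup_{P_{A^n}\in\mathcal{P}} I(A^n;B^n) \;\leq\; \sup_{P_{A^n}\in\mathcal{P}_{av}} I(A^n;B^n) \;=\; n\bigl[I(A;B)\bigr]_{P_A=P_d},
\end{equation*}
where the last equality is the Zhang--Guo result invoked just above. This shows the quantity in the limit is nonnegative.

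For the matching lower bound, the plan is to exhibit a sequence $Q_n\in\mathcal{P}$ whose mutual information between $A^n$ and $B^n$ approaches $nC$, where $C=[I(A;B)]_{P_A=P_d}$. The natural candidate is $Q_n = P_d^{\otimes n}(\,\cdot\mid E_n)$, where $E_n\subset\R^n$ is the event that the empirical power and sparsity constraints hold. By construction $Q_n$ is supported on $E_n$ and hence lies in $\mathcal{P}$. To compare $I_{Q_n}(A^n;B^n)$ with $I_{P_d^{\otimes n}}(A^n;B^n)=nC$, introduce the indicator $Z=\mathbf{1}_{E_n}(A^n)$. Since $Z$ is a deterministic function of $A^n$, the chain rule yields
\begin{equation*}
nC \;=\; I(A^n;B^n) \;=\; I(Z;B^n) + p_n I_{Q_n}(A^n;B^n) + (1-p_n) I_{Q'_n}(A^n;B^n),
\end{equation*}
where $p_n = P_d^{\otimes n}(E_n)$ and $Q'_n = P_d^{\otimes n}(\,\cdot\mid E_n^c)$. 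Bounding $I(Z;B^n)\leq H(Z)\leq \log 2$ and rearranging gives a lower bound on $I_{Q_n}(A^n;B^n)$ in terms of $p_n$, $nC$, and $I_{Q'_n}(A^n;B^n)$, which is the key inequality to drive to $nC - o(1)$.

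The hard step is to control the residual term $(1-p_n)\,I_{Q'_n}(A^n;B^n)$ and, if needed, to arrange $p_n\to 1$. I would handle this by a concentration argument on the random quantities $\frac{1}{n}\sum A_i^2$ and $\frac{1}{n}\sum \mathbf{1}_{\{A_i\neq 0\}}$ under $P_d^{\otimes n}$: law of large numbers guarantees $p_n$ stays bounded away from zero, and a tail bound (Chebyshev or Cram\'er--Chernoff on the i.i.d.\ sums) shows that even when the constraints are exceeded, they are exceeded by a vanishing amount, so that $I_{Q'_n}(A^n;B^n) = nC + o(n)$ and $(1-p_n) I_{Q'_n}= o(1)$. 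If tightness of the $P_d$ constraints prevents $p_n\to 1$ directly, the fallback is to replace $P_d$ by the capacity-achieving distribution $P_d^{(\delta_n)}$ for the slightly relaxed parameters $(P-\delta_n, q-\delta_n)$ with $\delta_n\downarrow 0$ slower than $n^{-1/2}$; LLN then gives $p_n\to 1$, and continuity (indeed smoothness) of the capacity function $C(P,q)$ gives $n[I(A;B)]_{P_A=P_d^{(\delta_n)}}=nC-O(n\delta_n)$. Balancing the concentration and continuity rates, and coupling $Q_n$ to $P_d^{(\delta_n)\otimes n}$ in total variation so that a continuity-of-mutual-information estimate applies, completes the proof.
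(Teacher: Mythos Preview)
Your plan follows the paper's strategy: slacken the constraints by $\epsilon$ (the paper writes $P_{d,\epsilon}$, you write $P_d^{(\delta_n)}$), condition the i.i.d.\ product on the feasible event to land in $\mathcal{P}$, and use concentration of the empirical power and sparsity to show the conditioning is nearly free. Where you differ is in how the mutual-information loss from conditioning is controlled. The paper bounds the output-entropy gap $h_{P_{d,\epsilon}^n}(B^n)-h_{Q_p}(B^n)$ directly, through a KL decomposition, a Jensen bound on $-\log P_{d,\epsilon}^n(b^n)$, and a page of second-moment estimates; your chain-rule identity $nC(\delta)=I(Z;B^n)+p_n\,I_{Q_n}(A^n;B^n)+(1-p_n)\,I_{Q'_n}(A^n;B^n)$ reaches the same destination more transparently. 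One simplification to your sketch: you do not need the claim $I_{Q'_n}=nC+o(n)$; since $P_{d,\delta}$ is discrete with bounded support (Zhang--Guo), the crude bound $I_{Q'_n}\le\tfrac{n}{2}\log(1+M^2)$ already forces $(1-p_n)\,I_{Q'_n}\to 0$ once $1-p_n=o(1/n)$. The closing step of passing from $nC(\delta)$ back to $nC$ is where both treatments are terse: the paper invokes ``continuity of mutual information'' in a single line, and your rate-balancing and TV-coupling remark is at the same level of detail, facing the same tension that $n(C-C(\delta_n))=O(n\delta_n)$ competes with the concentration requirement $\delta_n\gg n^{-1/2}$.
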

Proof of Lemma \ref{lem:difference between two mutual information} is given in Appendix \ref{sec:proof of difference between two mutual information lemma}. Thus, if the number of basis are large enough, the performance of Bayesian filter $\hat{X}_{P_d^n}^{\rm Bayes}$ is close to the optimum.

\subsection{Poisson Channel and Direct Current Signal}\label{subsec:poisson channel and dc signal(examples)}
Consider direct current (DC) signal estimation over the Poisson channel. The input process is $X_t \equiv X$ for all $0\leq t\leq T$, where $X$ is a random variable bounded by $a\leq X\leq A$ for some positive constants $a$ and $A$. We can define the uncertainty set $\Lambda$ such that $\{P_{\theta} : \theta\in\Lambda\}$ is the set of all possible probability measures on $X$ under which $a\leq X\leq A$ holds almost surely. The estimator observes a Poisson process with rate $X_t$ and performance is measured under the \emph{natural loss function} $l(x,\hat{x}) = x\log(x/\hat{x}) -x+\hat{x}$.

Similar to the previous section, we can define $\Lambda_D$ and prove $\minimax(\Lambda)=\minimax(\Lambda_D)$. It is clear that $\{P_{\theta} : \theta\in\Lambda\}$ is convex and tight. Since $Y_T$ is a sufficient statistic of $Y^T$ for $X^T$ (which is constant at $X$), we have 
\begin{align*}
\minimax(\Lambda)&= \minimax(\Lambda_D)\\
&= \sup_{w\in\mu(\Lambda_D)}I(X^T;Y^T)\\
&=\sup_{P_X\in\mu([a,A])} I(X;Y_T),
\end{align*}
where the maximization is over all distributions on $X$ supported on $[a, A]$. The corresponding communication problem is the capacity achieving problem of the discrete-time Poisson channel. Discrete-time Poisson channel takes nonnegative, real valued $X$ as an input, and outputs a Poisson random variable with parameter $TX$. Note that we have additional input constraint that $a\leq X\leq A$ almost surely. In this scenario, Shamai \cite{shamai91} showed that capacity achieving distribution is discrete with finite number of mass points. Let $P_s$ be this capacity achieving distribution. Using Theorem \ref{thm:minimax filter}, we can conclude that the minimax causal estimator is conditional expectation of $X$ given $Y_t$ with respect to the distribution $P_s$, i.e.,
\beas
\hat{X}_t(Y^t) = \E_{P_s}[X|Y_t].
\eeas
Although an analytic expression of $P_s$ and capacity of the channel has yet to be found, we can approximate the distribution numerically to arbitrary precision.

\section{Experiments}\label{sec:experiments}
\subsection{Gaussian Channel and Sparse Signal}\label{subsec:gaussian channel and sparse signal(experiments)}
Consider the setting of Section \ref{subsec:gaussian channel and sparse signal(examples)}. As described in \cite{zhang11}, we approximate $P_d$ with finite number of mass points. Initially, find an maximum mutual information for three mass points, then increase the number of mass points until the increment of maximum mutual information is smaller than $10^{-5}$. Using approximated version of $P_d$, we can construct the Bayesian filter which is close to the optimum as described in Section \ref{subsubsec:almost optimum causal minimax estimator}.

 In order to compare the performance of the suggested minimax filter,  we introduce some possible estimators.
One naive choice of estimator is the maximum likelihood (ML) estimator. For equation \eqref{eq:equivalent_vector_estimation_problem}, the ML estimation of vector $\mathbf{A}$ is given as
\begin{align*}
\hat{\mathbf{A}} = \left(\Lambda_{\mbox{eff}}(t)^{1/2} V_{\mbox{eff}}(t)^{\mathrm{T}}\right)^{\dagger}\Lambda_{\mbox{eff}}(t)^{-1/2}V_{\mbox{eff}}(t)^{\mathrm{T}} \mathbf{\tilde{Y}}(t)
\end{align*}
where $X^{\dagger}$ is Moore-Penrose pseudo-inverse of matrix $X$. Since $A$ is sparse, we can further improve the estimator with thresholding. For example, estimator can do ML estimation and then take the largest $nq$ elements of $\hat{\mathbf{A}}$. 

Another possible estimator is the minimax estimator that lacks the sparsity information. Since the estimator does not know that the signal is sparse, it assumes the uncertainty set is $\mathP_{LS} = \{P_{\theta} : P_{\theta}(\frac{1}{n}||\mathbf{A}||_2^2 \leq P)=1\}$. Using similar ideas in the previous section, we can relate this minimax optimization problem to the channel coding problem on the Gaussian channel with average power constraint. Moreover, we can find the almost minimax filter which is Bayesian with i.i.d. Gaussian prior, i.e., $\mathbf{A}\sim\mathcal{N}({\bf 0},PI_n)$. Note that this filter turns out to be linear which is easy to implement. Using the result of the previous section, we have
\begin{align*}
&\Lambda_{\mbox{eff}}(t)^{-1/2}V_{\mbox{eff}}(t)^{\mathrm{T}} \mathbf{\tilde{Y}}(t)\nonumber\\
 &= \Lambda_{\mbox{eff}}(t)^{1/2} V_{\mbox{eff}}(t)^{\mathrm{T}} \mathbf{A}+ \Lambda_{\mbox{eff}}(t)^{-1/2}V_{\mbox{eff}}(t)^{\mathrm{T}}\mathbf{\tilde{W}}(t).
\end{align*}
Since every components are Gaussian, we can easily compute the conditional expectation. Recall, $\mathbf{A} \sim \mathcal{N}({\bf 0},PI_n)$, and $\Lambda_{\mbox{eff}}(t)^{-1/2} V_{\mbox{eff}}(t)^{\mathrm{T}} \tilde{Y}(t) \sim \mathcal{N}({\bf 0},P\Lambda_{\mbox{eff}}(t)+I_{n-m})$. Therefore, 
\begin{align*}
&\E[\mathbf{A} | \Lambda_{\mbox{eff}}(t)^{-1/2} V_{\mbox{eff}}(t)^{\mathrm{T}}\mathbf{\tilde{Y}}(t)] \nonumber\\
&=PV_{\mbox{eff}}(t) \left(P\Lambda_{\mbox{eff}}(t)+I_{n-m}\right)^{-1}  V_{\mbox{eff}}(t)^{\mathrm{T}} \mathbf{\tilde{Y}}(t).
\end{align*}

We can also consider the genie-aided scheme which allows additional information of the source. Suppose the decoder knows the position of nonzeros $i_1,\cdots,i_k$, i.e., the estimator knows the fact that $A_{i_1}, \cdots, A_{i_k}$ are nonzero and all others are zero. Clearly, this scheme should outperform all other schemes. Let $\mathbf{A}_{\mbox{nonzero}}$ be a $k$ dimensional vector that consists of nonzero elements of ${\bf A}$. Since the decoder has additional information, it is enough to estimate $\mathbf{A}_{\mbox{nonzero}}$. Using similar argument from the minimax estimator that lacks the sparsity information, we can show that the optimum minimax estimator is a Bayesian estimator with prior $\mathcal{N}\left(\mathbf{0},\frac{nP}{k}I_k\right)$. Recall equation \eqref{eq:equivalent_vector_estimation_problem} and let $U_{\mbox{eff}}$ be a matrix consisting of columns of $\Lambda_{\mbox{eff}}(t)^{1/2} V_{\mbox{eff}}(t)^{\mathrm{T}}$ which coincides with nonzero position of $A$. Then we can rewrite the equation \eqref{eq:equivalent_vector_estimation_problem} as
\begin{align*}
&\Lambda_{\mbox{eff}}(t)^{-1/2}V_{\mbox{eff}}(t)^{\mathrm{T}} \mathbf{\tilde{Y}}(t) \nonumber\\
&=  U_{\mbox{eff}}\mathbf{A}_{\mbox{nonzero}}+ \Lambda_{\mbox{eff}}(t)^{-1/2}V_{\mbox{eff}}(t)^{\mathrm{T}}\mathbf{\tilde{W}}(t).
\end{align*}
It is clear that $\Lambda_{\mbox{eff}}(t)^{-1/2} V_{\mbox{eff}}(t)^{\mathrm{T}} \mathbf{\tilde{Y}}(t) \sim \mathcal{N}({\bf 0},P\Lambda_{\mbox{eff}}(t)+I_{n-m})$. Therefore, 
\begin{align*}
&\E[\mathbf{A}_{\mbox{nonzero}}| \Lambda_{\mbox{eff}}(t)^{-1/2} V_{\mbox{eff}}(t)^{\mathrm{T}}\mathbf{ \tilde{Y}}(t)] \nonumber\\
&= \frac{nP}{k} U_{\mbox{eff}}^{\mathrm{T}} (U_{\mbox{eff}} U_{\mbox{eff}}^{\mathrm{T}}+I_{n-m})^{-1} \Lambda_{\mbox{eff}}(t)^{-1/2} V_{\mbox{eff}}(t)^{\mathrm{T}} \mathbf{\tilde{Y}}(t).
\end{align*}

\begin{figure}
\includegraphics[width=.5\textwidth]{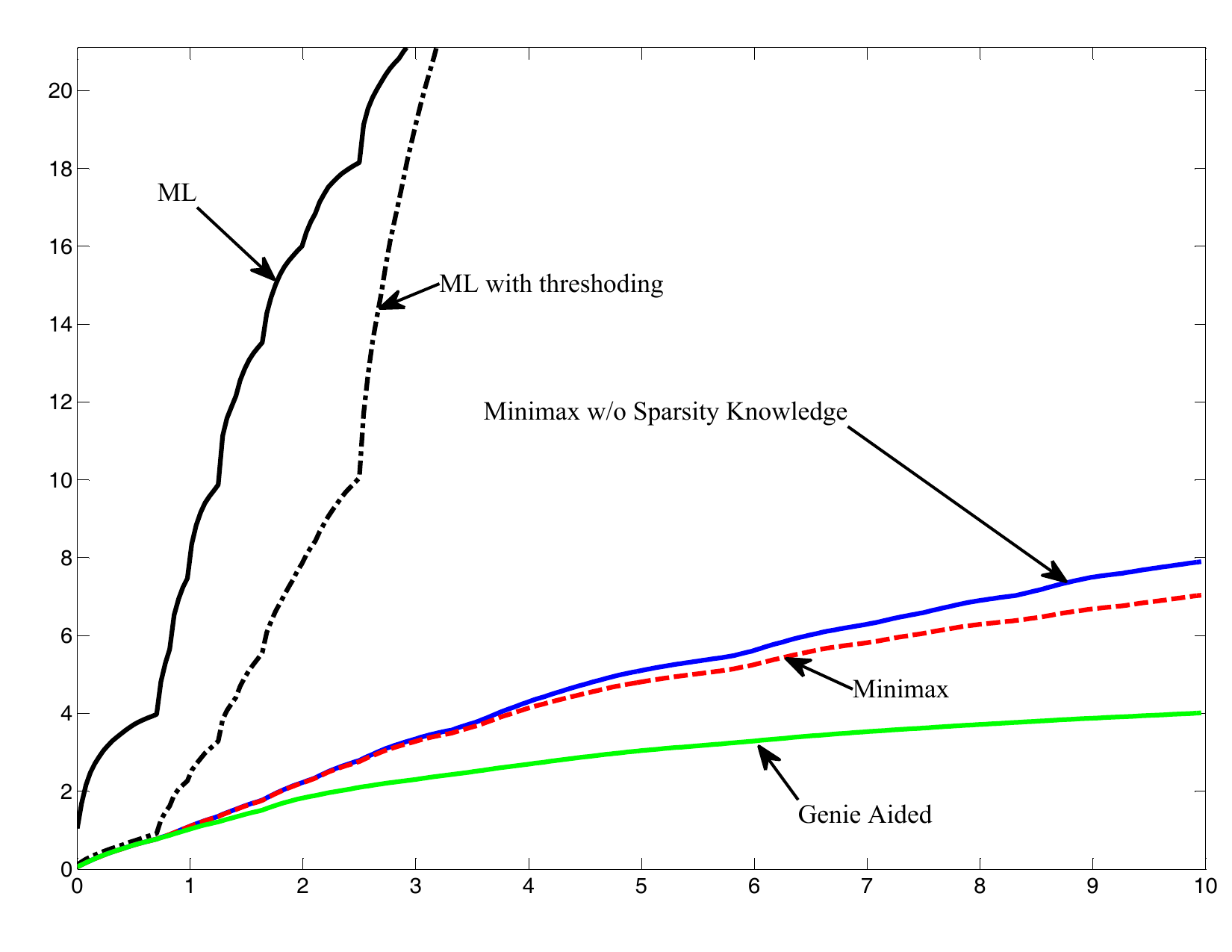}
\caption{Plots of $\cmle$ for the experiment of Section\ref{subsec:gaussian channel and sparse signal(experiments)}. We set $T=10$. $x$-axis shows time and the $y$-axis represents the worst causal mean loss for each estimator.}\label{fig:simulation results gaussian}
\end{figure}

We compare the performance of estimators in Figure \ref{fig:simulation results gaussian}. We choose $n=7$, $k=2$, $P = 10^{0.4}$(4dB), and Haar basis as an orthonormal signal set. We generate the random sparse coefficients by drawing the $k$ nonzero coefficients according to Gaussian distribution. For each realization of coefficients, we generate 100 output signals and take an average of causal loss. Finally, we take the maximum causal mean loss for each estimators among 100 simulations in order to check the worst case performance. We can see that minimax estimator outperforms maximum likelihood estimators and minimax estimator without sparsity knowledge. Note that the performance of minimax estimator is comparable to genie-aided estimator even though the genie-aided estimator used additional information.

\subsection{Poisson Channel and DC Signal}\label{subsec:poisson channel and dc signal(experiments)}
Optimum filter can be approximated using similar technique from Section \ref{subsec:gaussian channel and sparse signal(experiments)}. For comparison, we present some other natural estimators. First is the ML estimator,
\beas
\hat{X}_{ML}(Y^t) = \min \left\{ \max\{a,\frac{Y_t}{t}\}, A\right\}.
\eeas

Another possible estimator is a Bayesian estimator which assumes $X$ has uniform distribution, i.e., $X\sim U[a,A]$. In this case, the optimum Bayesian estimator is 
\begin{align*}
\hat{X}_{\mbox{unif}}(Y^t) = \frac{Y_t+1}{t} +\frac{e^{-at}a^{Y_t+1}-e^{-At}A^{Y_t+1}}{t\int_a^A e^{-xt}x^{Y_t} dx}.
\end{align*}
 
\begin{figure}
\centering
\includegraphics[width=.5\textwidth]{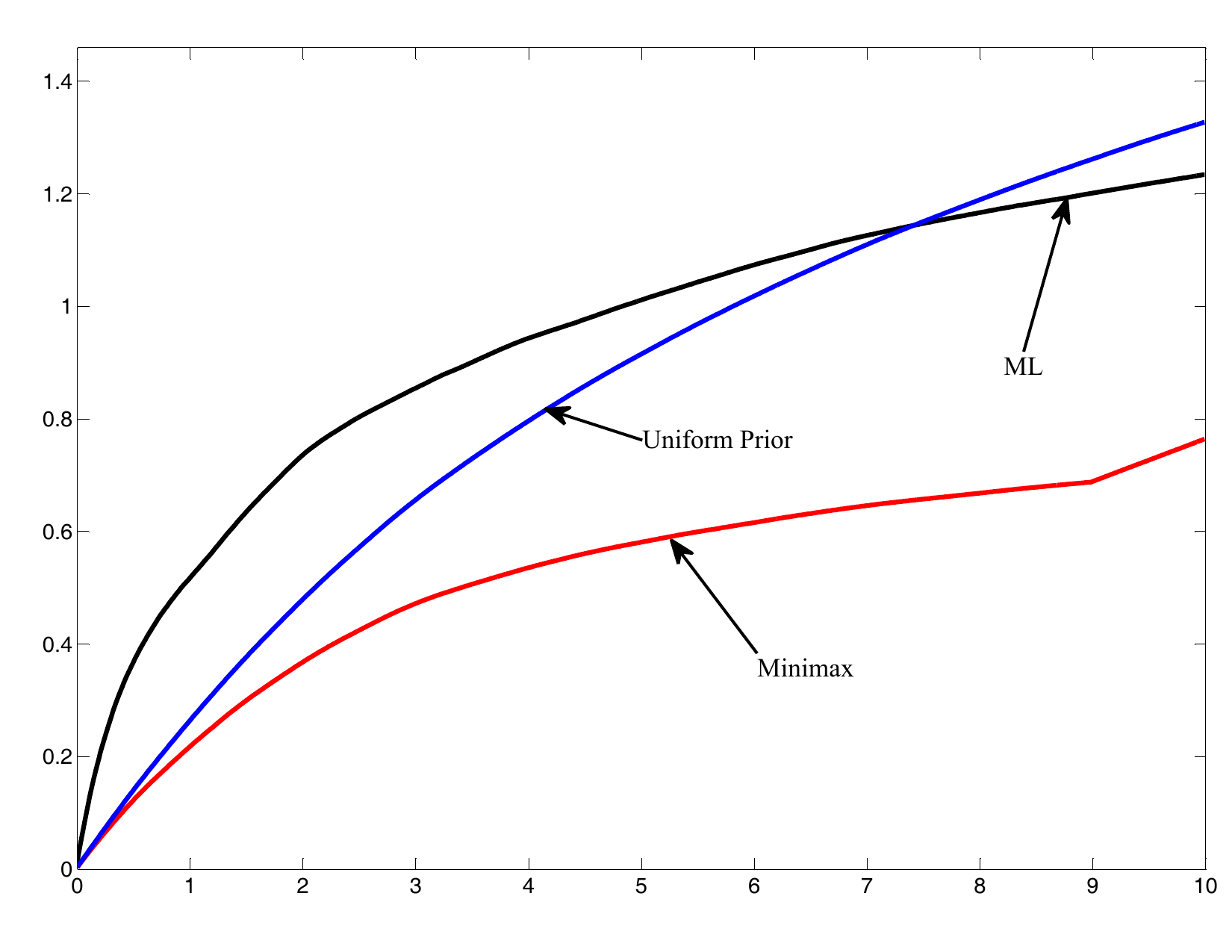}
\caption{Plots of $\cmle$ for the experiment of Section\ref{subsec:poisson channel and dc signal(experiments)}. Here we set $T=10$. $x$-axis shows time and the $y$-axis represents the worst causal mean loss for each estimator. }\label{fig:simulation results poisson}
\end{figure}

Figure \ref{fig:simulation results poisson} shows numerical results for $a=0.5$, $A=2$ case. We take an average of causal mean loss error over 100 times for $X=0.5,1,1.5,2$ and find an worst case error. The minimax estimator outperforms the other estimators as expected.

\section{Conclusions}\label{sec:conclusions}
We considered minimax estimation, focusing on the case of causal estimation when the noise-free object is a continuous-time signal and governed by a law from a given uncertainty set. We showed that the minimax filter is a Bayesian filter if the distortion criterion satisfies certain properties. We also characterized the worst case regret and the minimax estimator in the case of Gaussian and Poisson channels by relating it to a familiar communication problem of maximizing mutual information. We further showed that the capacity achieving prior coincides with the least favorable input. Using the idea of strong redundancy/regret-capacity theorem, we showed that our minimax estimator is optimum in a sense much stronger than it was designed to optimize for. Using these results, we presented two examples: sparse signal estimation under Gaussian setting and DC signal estimation under Poisson setting, for which we have used our results to derive and implement the minimax filter and exhibit its favorable performance in practice.

Our estimation framework can be applied to many other estimation problems. One possible extension is to apply Theorem \ref{thm:presence of feedback} to stochastic learning problems of the type considered by Bento et al. in \cite{bento11}. In this setting, the process $Y^T$ is defined by stochastic equation $Y_t = F(Y_t;A)dt+dW_t$, where $A$ is an unknown random parameter and $W^T$ is a standard Brownian motion. We can set $X_t = F(Y_t;A)$ and consider our estimation framework with feedback. We can apply our frameworks to estimate $X^T$ in the minimax sense and learn $A$. It will be interesting to investigate how an estimator guided by this approach would compare to that in \cite{bento11}.

\section*{Acknowledgment}\label{sec:Acknowledgment}
The authors would like to thank Ernest Ryu and Kartik Venkat for valuable discussions. The authors also would like to thank the anonymous reviewers and associate editor for their thorough and constructive feedback resulting in improved manuscript.

\appendices
\section{Proof of Lemma \ref{lem:theta_d is equal to theta}}\label{sec:proof of theta_d is equal to theta lemma}
Since $\Lambda_D\subset \Lambda$, we have
\begin{align*}
\minimax(\Lambda_D) &= \min_{Q\in\conv(\mathP_D)}\sup_{\theta\in\Lambda_D} \regretbayesQ\\
&=\min_{Q\in\mathP}\sup_{\theta\in\Lambda_D} \regretbayesQ\\
&\leq \min_{Q\in\mathP}\sup_{\theta\in\Lambda} \regretbayesQ\\
&=\minimax(\Lambda).
\end{align*}
On the other hand, 
\begin{align*}
&\minimax(\Lambda) \\
&= \min_{Q\in\mathP}\sup_{\theta\in\Lambda} \regretbayesQ\\
&\leq\min_{Q\in\mathP}\sup_{\theta\in\Lambda} \E_{P_{\theta}}\left[\int_0^T l(X_t,\E_Q[X|Y^t]) dt\right].
\end{align*}
It is clear that
\begin{align*}
&\E_{P_{\theta}}\left[\int_0^T l(X_t,\E_Q[X|Y^t]) dt\right]\\
&= \int\E\left[\int_0^T l(X_t,\E_Q[X|Y^t]) dt\suchthat A^n=a^n\right]dP_{\theta}(a^n),
\end{align*}
and therefore
\begin{align*}
&\sup_{\theta\in\Lambda} \E_{P_{\theta}}\left[\int_0^T l(X_t,\E_Q[X|Y^t]) dt\right]\\
&\leq\sup_{a^n\in \mathT^{(n)}} \E\left[\int_0^T l(X_t,\E_Q[X|Y^t]) dt\suchthat A^n=a^n\right]\\
&=\sup_{\theta\in\Lambda_D} \E_{P_\theta}\left[\int_0^T l(X_t,\E_Q[X|Y^t]) dt\right]
\end{align*}
where $\mathT^{(n)} = \{a^n\in\R^n : \frac{1}{n}\sum_{i=1}^n a_i^2 \leq P, \frac{1}{n}\sum_{i=1}^n \mathbf{1}(a_i\neq0) \leq q\}$ is a set of vector $a^n$ that satisfies constraints. This implies that 
\begin{align*}
\minimax(\Lambda)\leq&\min_{Q\in\mathP}\sup_{\theta\in\Lambda} \E_{P_{\theta}}\left[\int_0^T l(X_t,\E_Q[X|Y^t]) dt\right]\\
\leq&\min_{Q\in\mathP}\sup_{\theta\in\Lambda_D} \E_{P_\theta}\left[\int_0^T l(X_t,\E_Q[X|Y^t]) dt\right]\\
=& \min_{Q\in\mathP}\sup_{\theta\in\Lambda_D} \regretbayesQ\\
=&\minimax(\Lambda_D)
\end{align*}
%
Finally, these two inequalities imply 
\begin{align*}
\minimax(\Lambda) = \minimax(\Lambda_D).
\end{align*}
Indeed, 
\begin{align*}
\sup_{\theta\in\Lambda} \regretbayesQ = \sup_{\theta\in\Lambda_D} \regretbayesQ
\end{align*}
 holds for any $Q\in\mathP$ in general.


\section{Proof of Lemma \ref{lem:sufficient statistics}}\label{sec:proof of sufficient statistics lemma}

\begin{proof}
At time $t$, output process $Y^t$ can be discretized as
\begin{align*}
\bar{Y} =&\begin{bmatrix}Y_{\frac{t}{N}}\quad \left(Y_{\frac{2t}{N}}-Y_{\frac{t}{N}}\right)\quad\cdots\quad \left(Y_{\frac{Nt}{N}}-Y_{\frac{(N-1)t}{N}}\right)\end{bmatrix}^{\mathrm{T}}.
\end{align*}
This $\bar{Y}$ can be approximated as
\begin{align*}
\bar{Y} \approx \frac{1}{N}\bar{\Phi} A+\bar{W}
\end{align*}
where
\begin{align*}
\bar{\Phi} =&
\begin{bmatrix}
\phi_1(0) & \phi_2(0) & \cdots & \phi_n(0)\\\phi_1(\frac{t}{N}) & \phi_2(\frac{t}{N}) & \cdots & \phi_n(\frac{t}{N})\\ & & \vdots & \\\phi_1(\frac{(N-1)t}{N}) & \phi_2(\frac{(N-1)t}{N}) & \cdots &\phi_n(\frac{(N-1)t}{N})
\end{bmatrix}\\
A =& 
\begin{bmatrix}a_1\quad a_2 \quad\cdots\quad a_n\end{bmatrix}^{\mathrm{T}}\\
\bar{W} =&\begin{bmatrix}W_{\frac{t}{N}}\quad \left(W_{\frac{2t}{N}}-W_{\frac{t}{N}}\right)\quad\cdots\quad \left(W_{\frac{Nt}{N}}-W_{\frac{(N-1)t}{N}}\right)\end{bmatrix}^{\mathrm{T}}.
\end{align*}
It is easy to see that $\bar{W} \sim \mathcal{N}(0,\frac{1}{N} I_N)$. Furthermore, $\int_0^t \phi_i(s)dY_s$ can be approximated as
\begin{align*}
 \sum_{k=1}^N \phi_i\left(\frac{(k-1)t}{N}\right) \left(Y_{\frac{kt}{N}}-Y_{\frac{(k-1)t}{N}}\right).
\end{align*}
This approximation is similar to the idea from Ito's integral, and it is enough to prove the lemma based on this approximation. Therefore, the lemma holds if and only if $ p(A|\bar{Y}) = p(A|\bar{\Phi}^{\mathrm{T}} \bar{Y})$ for all $\bar{Y}$ which is enough to show that $\frac{p(\bar{Y}|A)}{p(\bar{\Phi}^{\mathrm{T}} \bar{Y}|A)}$ is constant (independent of choice of $A$) for all $\bar{Y}$. Throughout the proof, we assume $\bar{\Phi}^{\mathrm{T}}\bar{\Phi}$ is invertible, however, it is not difficult to derive the similar result when $\bar{\Phi}^{\mathrm{T}}\bar{\Phi}$ is not invertible.

It is easy to check that
\begin{align*}
&\log p(\bar{Y}|A)\\
 =& \log p(\bar{W} = \bar{Y}-\frac{1}{N}\bar{\Phi}A)\\
=&-\log {(2\pi (1/N)^N)^{N/2}} \\
&-\frac{N^N}{2}(\bar{Y}-\frac{1}{N}\bar{\Phi}A)^{\mathrm{T}}(\bar{Y}-\frac{1}{N}\bar{\Phi}A)\\
=&-\log {(2\pi (1/N)^N)^{N/2}}\\
&-\frac{N^N}{2}(\bar{Y}^{\mathrm{T}}\bar{Y}-\frac{2}{N}A^{\mathrm{T}}\bar{\Phi}^{\mathrm{T}}Y+\frac{1}{N^2}A^{\mathrm{T}}\bar{\Phi}^{\mathrm{T}}\bar{\Phi}A).
\end{align*}
On the other hand,
\begin{align*}
&\log p(\bar{\Phi}^{\mathrm{T}} \bar{Y}|A)\\
=&  \log p(\bar{\Phi}^{\mathrm{T}}\bar{W} = \bar{\Phi}^{\mathrm{T}}\bar{Y} - \frac{1}{N}\bar{\Phi}^{\mathrm{T}}\bar{\Phi}A)\\
=&-\log {(2\pi \cdot\mbox{det}((1/N)\bar{\Phi}^{\mathrm{T}}\bar{\Phi}))^{n/2}} \\
&-\frac{N^N}{2}(\bar{\Phi}^{\mathrm{T}}\bar{Y} - \frac{1}{N}\bar{\Phi}^{\mathrm{T}}\bar{\Phi}A)^{\mathrm{T}} (\bar{\Phi}^{\mathrm{T}}\bar{\Phi})^{-1}(\bar{\Phi}^{\mathrm{T}}\bar{Y} - \frac{1}{N}\bar{\Phi}^{\mathrm{T}}\bar{\Phi}A)\\
=&-\log {(2\pi \cdot\mbox{det}((1/N)\bar{\Phi}^{\mathrm{T}}\bar{\Phi}))^{n/2}}\\
&-\frac{N^N}{2}(\bar{Y}^{\mathrm{T}}\bar{Y} - \frac{2}{N}A^{\mathrm{T}}\bar{\Phi}^{\mathrm{T}}\bar{Y}+\frac{1}{N^2}A^{\mathrm{T}}\bar{\Phi}^{\mathrm{T}}\bar{\Phi}A)\\
&-\frac{N^N}{2}(\bar{Y}^{\mathrm{T}}\bar{\Phi} (\Phi^{\mathrm{T}}\Phi)^{-1}\bar{\Phi}^{\mathrm{T}}\bar{Y} - \bar{Y}^{\mathrm{T}}\bar{Y})
\end{align*}
where $\mbox{det}(\cdot)$ denotes the determinant of the matrix. Thus,
\begin{align*}
&\log \frac{p(\bar{Y}|A)}{p(\bar{\Phi}^{\mathrm{T}} \bar{Y}|A)}\\
=&\log  \frac{(2\pi\cdot \mbox{det}((1/N)\bar{\Phi}^{\mathrm{T}}\bar{\Phi}))^{n/2}}{(2\pi (1/N)^N)^{N/2}}\\
&+\frac{N^N}{2}(\bar{Y}^{\mathrm{T}}\bar{\Phi} (\Phi^{\mathrm{T}}\Phi)^{-1}\bar{\Phi}^{\mathrm{T}}\bar{Y} - \bar{Y}^{\mathrm{T}}\bar{Y}).
\end{align*}
Therefore, the fraction $\frac{p(\bar{Y}|A)}{p(\bar{\Phi}^{\mathrm{T}} \bar{Y}|A)}$ is independent of choice of $A$. This completes the proof of lemma. 
\end{proof}

\section{Proof of Lemma \ref{lem:upper bound of L}}\label{sec:proof of upper bound of L lemma}
\begin{proof}
Let define a class of all deterministic laws $\mathP_{D,all} = \{P_{\theta}: P_{\theta}(a^n)=1 \mbox{ for some $a^n\in\field{R}^n$}\}$ with corresponding index set $\Lambda_{D,all}$. Define $\mu_{D,av} = \{w\in\mu(\Lambda_{D,all}):\int P_{\theta}w(d\theta)\in\mathP_{av}\}$ which is a class of measure on $\Lambda_{D,all}$ that satisfies $\int P_{\theta}w(d\theta)\in\mathP_{av}$. Then,
\begin{align}
&\min_{Q\in\mathP_{av}} \sup_{w(\cdot)\in\mu_{D,av}} \int D(P_{\theta}||Q)w(d\theta)\nonumber\\
&= \min_{Q\in\mathP_{av}} \sup_{w(\cdot)\in\mu_{D,av}} \int D(P_{\theta}||Q_w)w(d\theta) + D(Q_w||Q)\label{eq:mu_av achiever}\\
&=  \sup_{w(\cdot)\in\mu_{D,av}}\min_{Q\in\mathP_{av}} \int D(P_{\theta}||Q_w)w(d\theta) + D(Q_w||Q)\label{eq:minimaxAgain}\\
&= \sup_{w(\cdot)\in\mu_{D,av}}\int D(P_{\theta}||Q_w)w(d\theta)\nonumber\\
&=  \sup_{w(\cdot)\in\mu_{D,av}}I(\Theta;B^n)\nonumber\\
&=  \sup_{w(\cdot)\in\mu_{D,av}}I(A^n;B^n)\nonumber\\
&=  \sup_{P_{A^n}\in\mathP_{av}}I(A^n;B^n)\nonumber\\
&=\left[I(A^n;B^n)\right]_{P_{A^n} = P_d^n}\nonumber
\end{align}
where we used minimax theorem in \eqref{eq:minimaxAgain}. Therefore, we can conclude that $P_d^n$ achieves the minimum of \eqref{eq:mu_av achiever}, i.e.,
\beas
 \sup_{w(\cdot)\in\mu_{D,av}} \int D(P_{\theta}||P_d^n)w(d\theta)=\left[I(A^n;B^n)\right]_{P_{A^n} =P_d^n}.
\eeas

On the other hand, we have
\begin{align*}
\sup_{\theta\in\Lambda} D(P_{\theta}||P_d^n) =&\sup_{\theta\in\Lambda_D} D(P_{\theta}||P_d^n) \\
=& \sup_{w(\cdot)\in\mu(\Lambda_D)} \int D(P_{\theta}||P_d^n) w(d\theta)\\
\leq& \sup_{w(\cdot)\in\mu_{D,av}} \int D(P_{\theta}||P_d^n)w(d\theta) \\
=&\left[I(A^n;B^n)\right]_{P_{A^n} = P_d^n}.
\end{align*}

Therefore, we can bound $L(\Lambda, P_d^n)$,
\begin{align*}
L(\Lambda,P_d^n) \stackrel{\triangle}{=} &\sup_{\theta\in\Lambda} R(\theta,\hat{X}_{P_d^n}^{\rm Bayes}) - \min_{Q\in\mathP}\sup_{\theta\in\Lambda} \regretbayesQ\\
\leq &\left[I(A^n;B^n)\right]_{P_{A^n} = P_d^n} - \left[I(A^n;B^n)\right]_{P_{A^n} = Q^*}.
\end{align*}
\end{proof}

\section{Proof of Lemma \ref{lem:difference between two mutual information}}\label{sec:proof of difference between two mutual information lemma}
\begin{proof}
It is trivial that $\sup_{w\in\mu(\Lambda)} I(A^n;B^n) \leq n\left[I(A;B)\right]_{P_A = P_d}$ for all $n$. Therefore, it is enough to find an upper bound of $ n\left[I(A;B)\right]_{P_A = P_d}-\sup_{w\in\mu(\Lambda)}I(A^n;B^n) $ that converges to 0 as $n$ grows. Recall that $ \sup_{w\in\mu(\Lambda)} I(A^n;B^n)$ is equal to $ \sup_{P_{\theta}\in\mathP} I(A^n;B^n)$.

Let probability law $P_{d,\epsilon}$ be a capacity achieving distribution of Gaussian channel with power constraint $P-\epsilon$ and duty cycle constraint $q-\epsilon$. In other words, $P_{d,\epsilon}$ is a supremum achiever of 
\beas 
\sup_{\substack{\E[A^2]\leq P-\epsilon\\ P(A\neq 0) <q-\epsilon}}I(A;B),
\eeas
 where $B$ is an output of standard Gaussian channel. Denote the measure $Q_p$ by projection of $P_{d,\epsilon}^n$ on $ \mathT^{(n)}_{\epsilon}$, i.e.,
\begin{align*}
Q_p(a^n) = 
\begin{cases}
\frac{P_{d,\epsilon}^n(a^n)}{\sum_{\tilde{a}^n\in \mathT^{(n)}_{\epsilon}}P_{d,\epsilon}^n(\tilde{a}^n)d\tilde{a}^n}&\mbox{if $a^n\in \mathT^{(n)}_{\epsilon}$}\\
0&\mbox{otherwise}
\end{cases}
\end{align*}
where $\mathT^{(n)}_{\epsilon}= \{a^n\in\R^n : P_{d,\epsilon}^n(a^n)\neq 0, \frac{1}{n}\sum_{i=1}^n a_i^2 \leq P, \frac{1}{n}\sum_{i=1}^n \mathbf{1}(a_i\neq0) \leq q\}$ is a set of point of masses $a^n$ that satisfies constraints. Alternatively, let $\mathN^{(n)}_{\epsilon}=\{a^n\in\R^n : P_{d,\epsilon}^n(a^n)\neq 0\}\setminus \left(\mathT^{(n)}_{\epsilon}\right)$, namely set of point masses that are not in the set $\mathT^{(n)}_{\epsilon}$. Recall that $P_{d,\epsilon}^n$ is discrete, and therefore both $Q_p$ and $P_{d,\epsilon}^n$ are probability mass functions. It is clear that $Q_p\in\mathP$ and $Q_p(a^n) = P_{d,\epsilon}^n(a^n|A^n\in \mathT_{\epsilon}^{(n)})$. Denote \begin{align*}
p^{(n)}_{\epsilon} \triangleq P_{d,\epsilon}^n(A^n\notin\mathT^{(n)}_{\epsilon})=P_{d,\epsilon}^n(A^n\in\mathN^{(n)}_{\epsilon}),
\end{align*}
then this implies 
\begin{align}
Q_p(a^n) = \frac{1}{1-p_{\epsilon}^{(n)}}P_{d,\epsilon}^n(a^n)\mathbf{1}(a^n\in \mathT_{\epsilon}^{(n)}).\label{eq:ratio between Qp and Pde}
\end{align}
By the law of large number, $p^{(n)}_{\epsilon}$ is vanishing exponentially as $n$ increase. Denote $Q_p(b^n)$ and $P_{d,\epsilon}^n(b^n)$ by output distributions of $B^n$ when the input law is $Q_p$ and $P_{d,\epsilon}^n$, respectively. Then, we have
\begin{align*}
&\left[I(A^n;B^n)\right]_{P_{A^n}=P_{d,\epsilon}^n} - \sup_{w\in\mu(\Lambda)} I(A^n;B^n) \nonumber\\
\leq&\left[I(A^n;B^n)\right]_{P_{A^n}=P_{d,\epsilon}^n} - \left[I(A^n;B^n)\right]_{P_{A^n}=Q_p}\\
=& \left[h(B^n)\right]_{P_{A^n}=P_{d,\epsilon}^n} -\left[h(B^n)\right]_{P_{A^n}=Q_p}\\
=& \int_{b^n} Q_p(b^n)\log Q_p(b^n)-P_{d,\epsilon}^n(b^n)\log P_{d,\epsilon}^n(b^n) db^n\\
=& D(Q_p(B^n)||P_{d,\epsilon}^n(B^n))\\
&+\int_{b^n} (Q_p(b^n)-P_{d,\epsilon}^n(b^n))\log P_{d,\epsilon}^n(b^n) db^n.
\end{align*}
Note that 
\begin{align}
Q_p(b^n) =& \sum_{a^n\in T_{\epsilon}^{(n)}} \frac{1}{1-p_{\epsilon}^{(n)}} P_{d,\epsilon}^n(a^n)P(b^n|a^n)\nonumber\\
\leq&\sum_{a^n} \frac{1}{1-p_{\epsilon}^{(n)}} P_{d,\epsilon}^n(a^n)P(b^n|a^n)\nonumber\\
=&\frac{1}{1-p_{\epsilon}^{(n)}} P_{d,\epsilon}^n(b^n),\label{eq:upperbound on Qpbn}
\end{align}
which implies
\begin{align*}
D(Q_p(B^n)||P_{d,\epsilon}^n(B^n)) \leq& -\log(1-p_{\epsilon}^{(n)}).
\end{align*}
By rearranging the terms, we can get
\begin{align*}
&-\int_{b^n} (P_{d,\epsilon}^n(b^n)-Q_p(b^n))\log P_{d,\epsilon}^n(b^n) db^n\\
=&-\int_{b^n} \left(\frac{1}{1-p_{\epsilon}^{(n)}}P_{d,\epsilon}^n(b^n)-Q_p(b^n)\right)\log P_{d,\epsilon}^n(b^n) db^n\\
&-\frac{p_e^{(n)}}{1-p_e^{(n)}} \cdot \left[h(B^n)\right]_{P_{A^n}=P_{d,\epsilon}^n}.
\end{align*}

We know $\frac{1}{1-p_{\epsilon}^{(n)}}P_{d,\epsilon}^n(b^n)-Q_p(b^n)$ is nonnegative for all $b^n$ from \eqref{eq:upperbound on Qpbn}. Also, we can bound $-\log P_{d,\epsilon}^n(b^n)$ using Jensen's inequality.
\begin{align*}
&-\log P_{d,\epsilon}^n(b^n) \nonumber\\
&= -\log\left(\sum_{a^n} P_{d,\epsilon}^n(a^n)P(b^n|a^n)da^n\right)\\
&\leq-\sum_{a^n}P_{d,\epsilon}^n(a^n) \log\left(\frac{1}{(\sqrt{2\pi})^n}\exp(-\frac{1}{2}||b^n-a^n||_2^2)\right)\\
&= n\log(\sqrt{2\pi}) +\frac{1}{2}\sum_{a^n} P_{d,\epsilon}^n(a^n)||b^n-a^n||_2^2 da^n\\
&\leq n\log(\sqrt{2\pi}) +||b^n||^2 + \E_{P^n_{d,\epsilon}}\left[||A^n||^2\right].
\end{align*}

Therefore,
\begin{align*}
&\left[I(A^n;B^n)\right]_{P_{A^n}=P_{d,\epsilon}^n} - \sup_{w\in\mu(\Lambda)} I(A^n;B^n) \nonumber\\
\leq&-\int_{b^n} \left(\frac{1}{1-p_{\epsilon}^{(n)}}P_{d,\epsilon}^n(b^n)-Q_p(b^n)\right)\log P_{d,\epsilon}^n(b^n) db^n\\
& +\delta_1^{(n)}\\
\leq&\frac{1}{1-p_{\epsilon}^{(n)}}(\E_{P_{d,\epsilon}^n}\left[||B^n||_2^2\right]+\E_{P_{d,\epsilon}^n}\left[||A^n||_2^2\right])\nonumber\\
&-(\E_{Q_p}\left[||B^n||_2^2\right]+\E_{P_{d,\epsilon}^n}\left[||A^n||_2^2\right])+\delta_1^{(n)}+\delta_2^{(n)}\\
=&\frac{1}{1-p_{\epsilon}^{(n)}}(2\E_{P_{d,\epsilon}^n}\left[||A^n||_2^2\right]+n)\nonumber\\
&-(\E_{Q_p}\left[||A^n||_2^2\right]+n+\E_{P_{d,\epsilon}^n}\left[||A^n||_2^2\right])+\delta_1^{(n)}+\delta_2^{(n)}\\
=& \frac{1+p_{\epsilon}^{(n)}}{1-p_{\epsilon}^{(n)}}\E_{P_{d,\epsilon}^n}\left[||A^n||_2^2\right]-\E_{Q_p}\left[||A^n||_2^2\right]+\delta_1^{(n)}+\delta_2^{(n)}\nonumber\\
&+ \frac{np_{\epsilon}^{(n)}}{1-p_{\epsilon}^{(n)}}\\
=& \sum_{a^n}\left(\frac{1}{1-p_{\epsilon}^{(n)}}P_{d,\epsilon}^n(a^n)-Q_p(a^n)\right)||a^n||_2^2 \nonumber\\
&+ \frac{p_{\epsilon}^{(n)}}{1-p_{\epsilon}^{(n)}}\E_{P_{d,\epsilon}^n}\left[||A^n||_2^2\right]+ \delta_1^{(n)}+\delta_2^{(n)}+\delta_3^{(n)}\\
=& \sum_{a^n\in \mathN_{\epsilon}^{(n)}}\left(\frac{1}{1-p_{\epsilon}^{(n)}}P_{d,\epsilon}^n(a^n)\right)||a^n||_2^2\nonumber\\
&+ \delta_1^{(n)}+\delta_2^{(n)}+\delta_3^{(n)}+\delta_4^{(n)}\\
=&\sum_{a^n}\left(\frac{1}{1-p_{\epsilon}^{(n)}}P_{d,\epsilon}^n(a^n)\right)||a^n||_2^2 \mathbf{1}(a^n\in\mathN_{\epsilon}^{(n)})da^n\nonumber\\
&+\delta_1^{(n)}+\delta_2^{(n)}+\delta_3^{(n)}+\delta_4^{(n)}\\
=&\frac{1}{1-p_{\epsilon}^{(n)}}\E_{P_{d,\epsilon}^n}\left[||A^n||_2^2 \mathbf{1}(A^n\in \mathN_{\epsilon}^{(n)})\right]\nonumber\\
&+\delta_1^{(n)}+\delta_2^{(n)}+\delta_3^{(n)}+\delta_4^{(n)},
\end{align*}
where $\delta_1^{(n)},\delta_2^{(n)},\delta_3^{(n)},\delta_4^{(n)}$ are defined as
\begin{align*}
\delta_1^{(n)} &=-\log(1-p_{\epsilon}^{(n)})-\frac{np_e^{(n)}}{1-p_e^{(n)}} \cdot \left[h(B)\right]_{P_{A}=P_{d,\epsilon}}\\
\delta_2^{(n)} &= \frac{p_{\epsilon}^{(n)}}{1-p_{\epsilon}^{(n)}}n\log(\sqrt{2\pi})\\
\delta_3^{(n)} &= \frac{np_{\epsilon}^{(n)}}{1-p_{\epsilon}^{(n)}} \\
\delta_4^{(n)} &= \frac{p_{\epsilon}^{(n)}}{1-p_{\epsilon}^{(n)}}\E_{P_{d,\epsilon}^n}[||A^n||_2^2],
\end{align*}
which are vanishing as $n$ grows to infinity. Note that $||A^n||_2^2 \mathbf{1}(A^n\in \mathN_{\epsilon}^{(n)})$ converges to zero with probability 1 by the strong law of large numbers, and therefore the expectation also converges to zero. By continuity of mutual information, we can finally conclude that $\left[I(A^n;B^n)\right]_{P_{A^n}=P_d^n} - \sup_{w\in\mu(\Lambda)} I(A^n;B^n)$ converges to zero as $n$ grows.
\end{proof}

\bibliographystyle{IEEEtran}

\begin{IEEEbiographynophoto}{Albert No} (S`12)
is currently a PhD candidate in the Department of Electrical Engineering at Stanford University, under the supervision of Prof. Tsachy Weissman.  His research interests include the relation between information and estimation theory, lossy compression and joint source-channel coding. 

Albert received a Bachelors degree in both Electrical Engineering and Mathematics from Seoul National University, in 2009, and a Masters degree in Electrical Engineering from Stanford University in 2012.
\end{IEEEbiographynophoto}

\begin{IEEEbiographynophoto}{Tsachy Weissman}(S'99-M'02-SM'07-F'12) 
is on the faculty of the department of Electrical Engineering at Stanford University, where he holds the STMicroelectronics Chair in the School of Engineering. He received his BSc and PhD from Technion in 1997 and 2001. He has published extensively on Information Theory, Statistical Signal Processing, the interplay between them, and their applications. He is inventor of several patents and involved in a number of hi-tech companies as researcher or member of the technical board. Much of his recent research has been dedicated to
the theory and practice of genomic data compression.

His research has been recognized with numerous awards, including best
paper awards, a Horev fellowship for Leaders in Science and
Technology, and a Henry Taub prize for excellence in research. He is a
Fellow of the Institute of Electrical and Electronics Engineers
(IEEE), and serves on the editorial boards of the IEEE Transactions on
Information Theory and Foundations and Trends in Communications and
Information Theory.
\end{IEEEbiographynophoto}
\end{document}